\documentclass[runningheads,final]{llncs}

\usepackage[colorlinks,linkcolor={blue},citecolor={blue},urlcolor={red},breaklinks=true,final]{hyperref}
\usepackage[nosumlimits,nointlimits,nonamelimits]{amsmath}

\let\oldmathop\mathop
\renewcommand{\mathop}[1]{\oldmathop{#1}\nolimits}
\makeatletter

\let\c@lemma\c@theorem
\let\c@corollary\c@theorem
\let\c@proposition\c@theorem
\let\c@definition\c@theorem
\let\c@example\c@theorem
\let\c@remark\c@theorem

\@addtoreset{theorem}{section}
\makeatother

\spnewtheorem{defn}[theorem]{Definition}{\bfseries}{\upshape}
\usepackage[T1]{fontenc}
\usepackage{extra-packages}
\usepackage{extra-macros}

\newcommand{\pfin}{\ftP_\omega}
\newcommand{\dist}{\mathcal{D}}
\newcommand{\neighb}{\mathcal{N}}
\newcommand{\mon}{\mathcal{M}}
\newcommand{\filtfun}{\mathcal{F}}
\newcommand{\ultra}{\mathcal{U}}
\newcommand{\nat}{\mathbb{N}}
\newcommand{\integers}{\mathbb{Z}}

\renewcommand{\c}{\colon}
\renewcommand{\emptyset}{\varnothing}

\renewcommand{\ftF}{F}
\renewcommand{\ftS}{S}
\renewcommand{\ftT}{T}

\usepackage{bm}
\renewcommand{\mS}{{\bm{\mathsf{S}}}}
\renewcommand{\mT}{{\bm{\mathsf{T}}}}
\renewcommand{\mP}{{\bm{\mathsf{P}}}}
\renewcommand{\mF}{{\bm{\mathsf{F}}}}
\renewcommand{\mM}{{\bm{\mathsf{M}}}}
\renewcommand{\mZ}{{\bm{\mathsf{1}}}}

\renewcommand{\ftbF}{{\bar{F}}}

\date{}
\begin{document}
\allowdisplaybreaks

\author{
        Sergey Goncharov\inst{1}\orcidID{0000-0001-6924-8766} \and
        Dirk Hofmann\inst{2}\orcidID{0000-0002-1082-6135} \and
        Pedro Nora\inst{3}\orcidID{0000-0001-8581-0675} \and
        Lutz Schr\"oder\inst{4}\orcidID{0000-0002-3146-5906} \and
        Paul~Wild\inst{4}\orcidID{0000-0001-9796-9675}
        }
\authorrunning{S. Goncharov et al.}
\institute{
         University of Birmingham, England \and
         CIDMA, University of Aveiro, Portugal \and
         Universität Duisburg-Essen, Germany \and
         Friedrich-Alexander-Universität Erlangen-Nürnberg, Germany
        }
\title{The Only Distributive Law Over the Powerset Monad Is the One You Know}
\maketitle

\begin{abstract}
Distributive laws of set functors over the powerset monad (also known as Kleisli laws for the powerset monad)  are well-known to be in one-to-one correspondence with extensions of set functors to functors on  the category of sets and relations. We study the question of existence and uniqueness of such  distributive laws.  Our main result entails that an accessible set functor admits a distributive law over the powerset monad if and only if it preserves weak pullbacks, in which case the so-called power law (which induces the Barr extension) is the unique one. Furthermore, we show that the powerset functor admits exactly three distributive laws over the powerset monad, revealing that uniqueness may fail for non-accessible functors.
\end{abstract}

\section{Introduction}

Distributive laws between functors and monads play a central role in
category theory and its applications to semantics, algebra, and
coalgebra. In particular, distributive laws of (endo)functors over the
powerset monad -- also known as \emph{Kleisli laws} for the powerset
monad -- are equivalent to extension of functors from the category of
sets and functions to the Kleisli category of the powerset monad,
equivalently to the category of sets and relations. Such extensions
are fundamental in relational semantics, coalgebraic modal
logic~\cite{KUPKE20115070}, and automata
theory~\cite{HasuoJacobsEtAl07}, where nondeterminism is modelled via
the powerset monad.  Correspondingly, the existence of extensions (or
distributive laws) has been studied
extensively~\cite{Bar70,Jac04,HasuoJacobsEtAl07}; in particular, it is
known that weak-pullback-preserving functors admit a canonical
extension to relations, the so-called Barr extension~\cite{Bar70}. On
the other hand, the question of \emph{uniqueness} of extensions has
received comparatively less attention.

Our main contribution in the present work is a systematic study of the
existence and uniqueness of Kleisli laws for the powerset monad. Our
first main result shows that a large and important class of functors
-- which strictly includes the accessible functors -- admit a
distributive law over the powerset monad only if they preserve weak
pullbacks (as noted above, the reverse implication holds for all
functors). Moreover, in this case the distributive law is unique and
coincides with the well-known \emph{power
  law}~\cite{HasuoJacobsEtAl07}, which induces the Barr extension to
relations. This provides a clean conceptual explanation for the
ubiquity of the Barr extension in coalgebraic semantics.

Our second main result demonstrates that uniqueness may fail outside
the accessible setting. Concretely, we show that the powerset functor
itself admits exactly three distinct distributive laws over the
powerset monad (note that this contrasts sharply with distributive
laws of the powerset \emph{monad} over itself, which do not
exist~\cite{KLIN2018261}). This example shows that preservation of
weak pullbacks, while ensuring existence of a Kleisli distributive law
for the powerset monad, does not by itself suffice to guarantee
uniqueness, and that accessibility plays a crucial role in our
characterization.

The importance of distributive laws for the powerset monad has been highlighted in a number of foundational works, including generic treatments of distributive laws and their applications to semantics. In particular, Jacobs and others~\cite{Jac04,HasuoJacobsEtAl07} have emphasized the role of relational extensions in coalgebraic logic, modal semantics, and notions of behavioural equivalence. Our results refine and extend this line of research by identifying precise conditions under which such liftings are unique.

\paragraph{Related work.} It is well-known that the Barr extension of
a functor is functorial if and only if the functor preserves weak
pullbacks~\cite{Bar70,Trn80}. The `only if' direction of this result
is complemented by our results, which show in particular that it
suffices to assume any extension (rather than specifically the Barr
extension), provided that the functor is accessible (or in fact
satisfies an even weaker condition).

Several recent works establish \emph{no-go theorems} for distributive laws in various settings. Zwart and Marsden~\cite{ZM22} study distributive laws between monads using algebraic presentations and derive general impossibility results. Kupke and Seidel~\cite{KS24} prove no-go theorems for distributive laws from comonads to monads, including the result that there is no distributive law of a polynomial functor over the non-empty powerset monad. These results have been further extended~\cite{PD25}, again using algebraic techniques. In contrast, our approach is not based on algebraic presentations, but rather on categorical properties such as accessibility and weak pullback preservation.

Closer to our setting, Kurz and Velebil~\cite{KV16} study extensions of functors to relations and explicitly leave open the question of uniqueness. Our results provide a negative answer to this question in general, while identifying a broad class of functors for which uniqueness does hold.

B\'ilkov\'a et al~\cite{BKP+11} prove a uniqueness result for
extensions of functors from the category of preorders and monotone
maps to the corresponding category of distributors, under a local
monotonicity assumption. This result has since been generalized to
$\V$-categories~\cite{BKP+13}. It is shown that a functor admits an
extension if and only if it preserves weak pullbacks; this result does
not contradict our results as the overall setting in \emph{op.~cit.}
entails that only locally monotone extensions are considered. 

Schubert and Seal~\cite{SS08,Sea09} use monad morphisms to construct \emph{lax extensions} to the category of sets and relations of set monads that factor through the category of complete lattices and supremum-preserving maps, where lax extensions weaken the functoriality requirement to only one inclusion.

\section{Preliminaries}

We assume basic familiarity with category theory.  Throughout, we work
in the categories $\SET$ of sets and functions and $\REL$ of sets and
relations.  We are interested in endofunctors on $\SET$ and their
extensions to $\REL$; all references to functors and extensions are to
be understood in this sense, unless stated otherwise.

\emph{Accessible functors.}  Given a regular cardinal $\lambda$, a
functor $\ftF$ is \emph{$\lambda$-accessible}~\cite{MP89} if it
preserves $\lambda$-filtered colimits; equivalently, if for every set
$X$,
\begin{equation}
        \label{p:7}
        \ftF X = \bigcup\{\ftF i_A [\ftF A]  \mid |A| < \lambda\},
\end{equation}
where $i_A \colon A \rightarrowtail X$ is the inclusion of $A$ into
$X$.  In particular, $\omega$-accessible functors are called
\emph{finitary}.  A functor is \emph{accessible} if it is
$\lambda$-accessible for some regular cardinal $\lambda$.  Every
functor $\ftF$ has a canonical $\lambda$-accessible subfunctor
$\ftF_\lambda$ whose action on sets is given by the right-hand side of
\eqref{p:7}; we refer to $\ftF_\omega$ as the \emph{finitary part}
of~$\ftF$.

\begin{example}
  \label{p:28}
  The following functors are accessible:
  \begin{enumerate}
  \item Every polynomial functor ($\ftF X=\sum_{i\in I} A_i\times X^{B_i}$
    for sets $I, A_i, B_i$).
  \item The finite powerset functor~$\pfin$, and every subfunctor
    $\ftP_n$ of $\pfin$ given by restricting to cardinality at
    most~$n$ ($\ftP_nX=\{A\in\pfin X\mid |A|\le n\}$).
  \item The discrete distribution functor~$\dist$, given on sets~$X$
    by $\dist X=\{\mu\colon X\to[0,1]\mid \sum_{x\in X}\mu(x)=1\}$.
  \item The monoid-valued functor $M^{(-)}$, given on sets~$X$ by
    \begin{equation*}
      M^{(X)}=\{f\colon X\to M\mid f(x)=0\text{ for all but finitely
        many~$x$}\},
    \end{equation*}
    for a commutative monoid $M=(M,+,0)$. Instances include the
    multiset functor $\nat^{(-)}$ (free commutative monoids) and the
    functor $\integers^{(-)}$ (free abelian groups).
  \end{enumerate}
\end{example}
\begin{example}
  \label{p:55}
  The following functors are \emph{not} accessible:
  \begin{enumerate}
  \item The (full) powerset functor~$\ftP$.
  \item The neighbourhood functor~$\neighb$, given on sets~$X$ by
    $\neighb X=2^{(2^X)}$ and on maps~$f$ by $\neighb f=2^{(2^f)}$.
    That is,~$\neighb$ is double contravariant powerset, and we regard
    elements of $\neighb X$ as systems of subsets of~$X$, termed
    \emph{neighbourhood systems}.
  \item The monotone neighbourhood functor~$\mon$, given as a
    subfunctor of~$\neighb$ by restricting to neighbourhood systems
    that are \emph{monotone}, i.e.~upwards closed under subset
    inclusion.
  \item The filter functor~$\filtfun$, given as a subfunctor of~$\mon$
    by restricting to monotone neighbourhood systems that are \emph{filters},
    that is, closed under finite intersections.
  \item The ultrafilter functor~$\ultra$, given as a subfunctor
    of~$\filtfun$ by restricting to \emph{ultrafilters}, that is, filters
    that are \emph{non-trivial}, i.e.~do not contain~$\emptyset$, and
    maximal.
  \end{enumerate}
\end{example}

\emph{Relations, extensions and Kleisli laws.}  A relation from $X$ to
$Y$, i.e.~a subset $r\subseteq X\times Y$, will be denoted by
$r \colon X \relto Y$ and its converse by $r^\circ \colon Y \relto X$,
and given relations $r \colon X \relto Y$ and $s \colon Y \relto Z$,
we denote their composite by $s \cdot r\colon X\relto Z$ (explicitly,
$s\cdot r=\{(x,z)\mid \exists y.\,(x,y)\in r\land (y,z)\in s\}$).
If $r\colon X \relto Y$ is a relation and $A\subseteq X$, then the
\emph{relational image} of $A$ under $r$ is
$r[A] = \{y\in Y \mid \exists x. (x,y)\in r\}$.
We often write~$\le$ for subset inclusion among relations, and
$x\mathrel{r}y$ for $(x,y)\in r$. A functor
\(\extF \colon \REL \to \REL\) is called an \emph{extension} of a
functor \(\ftF \colon \SET\to \SET\) if it makes the following diagram
commute, where the vertical arrows denote the graph functor:
\begin{center}
 	\begin{tikzcd}
                \REL & \REL \\
 	        \SET & \SET.
 	        \ar[from=1-1,to=1-2,"\extF "]
 	        \ar[from=2-1,to=1-1]
 	        \ar[from=2-1,to=2-2, "\ftF"']
 	        \ar[from=2-2,to=1-2]
 	\end{tikzcd}
\end{center}
In other words, an extension of $\ftF$ consists of maps
\begin{equation*}
  (r\colon X \relto Y)
  \longmapsto 
  (\extF  r \colon \ftF X \relto \ftF Y)
\end{equation*}
such that for all relations $r \colon X \relto Y$,
$s \colon Y \relto Z$, and every function $f \colon X \to Y$,
\begin{itemize}[wide]
        \item
                $\extF  s\cdot\extF  r=\extF  (s\cdot r)$,
        \item
                $\extF  f = \ftF f$,
\end{itemize}
where functions are interpreted as relations via their graph.
An extension $\extF $ is \emph{locally monotone} if $r \leq r'$
entails $\extF r \leq \extF r'$.  It is well-known that extensions of
a functor~$\ftF$ correspond precisely to Kleisli distributive laws
of~$\ftF$ over the powerset monad \cite{Bec69,Str72}.  The powerset
monad $(\ftP,\eta,\mu)$ consists of the powerset functor
$\ftP \colon \SET \to \SET$ and two natural transformations, the unit
$\eta \colon \ftId \to \ftP$ given by $\eta(x)=\{x\}$, and the
multiplication $\mu \colon \ftP\ftP \to \ftP$ which takes unions of
systems of sets.  We recall that the category $\REL$ is isomorphic to
the Kleisli category $\SET_\mP$, which has all sets as objects, and
maps $X\to\ftP Y$ as morphisms $X\to Y$.  We write
\(r^\sharp \colon X \to \ftP Y\) for the function corresponding to the
relation \(r\) under the isomorphism \(\REL \simeq \SET_\mP\). A
\emph{distributive law} of a functor $\ftF \colon \SET \to \SET$ over
the powerset monad is a natural transformation
$\sigma \colon \ftF\ftP \to \ftP\ftF$ that commutes with the unit and
the multiplication:
\begin{equation*}
\begin{array}{c c}
\begin{tikzcd}
        \ftF & \ftF\ftP  \\
          & \ftP F
        \ar[from=1-1, to=1-2, "\ftF\eta"]
        \ar[from=1-1, to=2-2, "\eta \ftF"']
        \ar[from=1-2, to=2-2, "\sigma"]
\end{tikzcd}
&
\hspace{3em}
\begin{tikzcd}
        \ftF\ftP^2   & & \ftF\ftP \\
        \ftP\ftF\ftP & \ftP^2\ftF & \ftP\ftF.
        \ar[from=1-1,to=1-3,"\ftF\mu"]
        \ar[from=1-1,to=2-1,"\sigma\ftP"']
        \ar[from=1-3,to=2-3,"\sigma"]
        \ar[from=2-1,to=2-2,"\ftP\sigma"']
        \ar[from=2-2,to=2-3,"\mu\ftF"']
\end{tikzcd}
\end{array}
\end{equation*}
The extension~$\extF$ of~$\ftF$ induced by~$\sigma$ is best described
in terms of Kleisli morphisms, namely as mapping a Kleisli morphism
$f\colon X\to\ftP Y$ to the Kleisli morphism
$\sigma_Y\cdot\ftF f\colon \ftF X\to \ftP\ftF Y$. Conversely, an
extension~$\extF$ of~$\ftF$ induces a distributive law~$\sigma$ as
follows: The identity map $\ftP X\to\ftP X$ corresponds to a relation
$\owns\colon \ftP X\relto X$; we then obtain a relation
$\extF{\owns}\colon F\ftP X\relto FX$, which corresponds to a Kleisli
morphism $\sigma_X\colon F\ftP X\to \ftP FX$.

Our main goal is to show that an accessible functor admits at most one
extension, which then coincides with the popular Barr
extension~\cite{Bar70}. Generally, we define the \emph{Barr
  extension}~$\ftbF$ of a functor~$\ftF$ as follows. Given a relation
$r \colon X \relto Y$ and a factorization $r=g\cdot f^\circ$ for maps
$X\xleftarrow{f}Z\xrightarrow{g} Y$, put
$\ftbF r = \ftF g \cdot {(\ftF f)}^\circ$.
One can show that this assignment is well-defined, i.e.~independent of
the factorization of $r$, and $r$ admits a \emph{canonical
  factorization} $r=\pi_2 \cdot \pi_1^\circ$ arising from the span
$X \xleftarrow{\pi_1} R \xrightarrow{\pi_2} Y$ where~$R$ is just the
relation~$r$, used as an object of~$\SET$, and~$\pi_1,\pi_2$ are the
projection functions. As indicated previously, the Barr extension may
fail to be an actual extension. Indeed, it is well-known that this
construction yields an extension that is locally monotone iff the
functor preserves weak pullbacks~\cite{Bar70,Trn80}.

\begin{example}
  \label{p:56}
  All functors from \Cref{p:55} admit monad structures.
  We already discussed the powerset monad $\mP$.
  For each of the other functors $F\in\{\ultra,\filtfun,\mon,\neighb\}$, the unit is the map $e^F\colon x \mapsto \{A \subseteq X \mid x \in A\}$ assigning to each element its principal ultrafilter, while the respective multiplications are given by $m^F\colon\Phi \mapsto \{A \subseteq X \mid \{U \in FX \mid A\in U\} \in \Phi\}$ 
  We generally use bold letters to denote these monads; e.g. $\mM = (\mon,e^\mon,m^\mon)$ is the monotone neighbourhood monad, and similarly for the other functors.
\end{example}
  
\paragraph{Weak Pullbacks} A commutative square
\begin{equation*}
\begin{tikzcd}[column sep=normal, row sep=normal]
A\rar["p"]\dar["q"'] & B\dar["f"]\\
C\rar["g"] & D               
\end{tikzcd}
\end{equation*}
is a \emph{pullback square} if for every competing square $f\cdot p'=g\cdot q'$, there exists a unique morphism~$k$ such that $p\cdot k=p'$ and $q\cdot k= q'$. The notion of \emph{weak pullback square} is defined in the same way except that~$k$ is not required to be unique. A functor \emph{preserves weak pullbacks} if it maps weak pullback squares to weak pullback squares, and preserves \emph{inverse images} if it maps pullback squares containing an injective map in the cospan to pullback squares. Since preserving weak pullbacks entails preserving injections, it follows that preserving weak pullbacks is stronger than preserving inverse images.

\begin{example}
  \label{p:30}
  The following functors preserve weak pullbacks.
  \begin{enumerate}
  \item Every polynomial functor.
  \item The powerset functor~$\ftP$.
  \item The finite powerset functor~$\pfin$.
   \item The filter functor~$\calF$.
  \item The ultrafilter functor~$\ultra$.
  \item The discrete distribution functor~$\dist$.
  \item The monoid-valued functor $\NN^{(-)}$.     
  \end{enumerate}
\end{example}
\begin{example}
  \label{p:31}
  The following functors do \emph{not} preserve weak pullbacks.
  \begin{enumerate}
  \item The functor $(-)^3_2$ given on sets~$X$ by
    $(X)^3_2=\{(x,y,z)\in X^3\mid |\{x,y,z\}|\le 2\}$.
  \item The restricted powerset functor $\ftP_n$ for $1 < n<\omega$.
  \item The monotone neighbourhood functor~$\mon$ and its finitary
    part~$\mon_\omega$.
  \item The neighbourhood functor~$\neighb$ and its finitary
    part~$\neighb_\omega$.
  \item The monoid-valued functor $\ZZ^{(-)}$.
  \end{enumerate}
\end{example}
To conclude this section, we present a few examples of Barr
extensions.
\begin{example}
  Let~$r\colon X\relto Y$. 
  \begin{enumerate}
  \item Let~$F=(-)^2$ be the squaring functor, and let $\ftbF$ be the
    Barr extension of~$\ftF$. Then $(x,y)\mathrel{\ftbF r}(x',y')$ iff
    $x\mathrel{r}x'$ and $y\mathrel{r}y'$.
  \item The Barr extension~$\overline\ftP$ of the powerset
    functor~$\ftP$ is the Egli-Milner extension; that is,
    $A\mathrel{\overline P r}B$ iff for every $x\in A$, there is
    $y\in B$ such that $x\mathrel{r}y$, and symmetrically.
  \item The Barr extension~$\overline\dist$ of the discrete
    distribution functor~$\dist$ is described as follows: Two
    distributions $\mu\in\dist X$, $\nu\in\dist Y$ are related under
    $\overline\dist r$ iff there exists a \emph{coupling}
    $\rho\in\dist r$ of~$\mu,\nu$, i.e.~a distribution~$\rho$ on~$r$
    (as a set) whose marginals are~$\mu$, $\nu$ in the sense that
    $\dist\pi_1(\rho)=\mu$, $\dist\pi_2(\rho)=\nu$ for~$\pi_1,\pi_2$
    as above.
  \end{enumerate}
\end{example}
\section{\dots \ It Is the Barr Extension}
We proceed to introduce a class of functors characterized by a
property that we term \emph{elementwise boundedness}. This class
includes all accessible functors (also known as bounded functors). We
show that if such a functor admits an extension, then this extension
is the Barr extension.

In the remainder of this section, we fix a functor
$\ftF \colon \SET \to \SET$ and an extension
$\extF \colon \REL \to \REL$ of $\ftF$.  We record some preliminary
observations.

\begin{lemma}
        \label{p:1}
        The following statements hold.
        \begin{enumerate}
                \item
                        \label{p:11}
                        For every injective function \(i \colon X \to Y\),
                        \((\ftF i)^\circ \leq \extF  (i^\circ)\).
                \item
                        \label{p:2}
                        For every surjective function \(e \colon X \to Y\),
                        \(\extF (e^\circ) \leq (\ftF e)^\circ\).
                \item
                        \label{p:24}
                        $\extF$ is completely determined by its action on
                        converses of injections and surjections.
        \end{enumerate}
\end{lemma}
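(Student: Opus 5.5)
The plan is to use only functoriality of $\extF$ together with $\extF f = \ftF f$ for functions, and the basic facts about relations: $i^\circ\cdot i = 1_X$ for an injection $i\colon X\to Y$, $e\cdot e^\circ = 1_Y$ for a surjection $e\colon X\to Y$, and, for any function $f$, $1\le f^\circ\cdot f$ and $f\cdot f^\circ\le 1$. One subtlety: since we do not assume local monotonicity, we cannot apply $\extF$ to these inequalities. We may only apply $\extF$ to equalities, and then use the inequalities on $\ftF$-images, which are genuine functions.

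For item~\ref{p:11}, let $i\colon X\to Y$ be injective. Functoriality and $i^\circ\cdot i=1_X$ give $\extF(i^\circ)\cdot \ftF i = \extF(i^\circ\cdot i)=1_{\ftF X}$. Composing on the right with $(\ftF i)^\circ$ yields
\begin{equation*}
  (\ftF i)^\circ = \extF(i^\circ)\cdot \ftF i\cdot(\ftF i)^\circ \le \extF(i^\circ),
\end{equation*}
since $\ftF i\cdot (\ftF i)^\circ\le 1_{\ftF Y}$ because $\ftF i$ is a function. Item~\ref{p:2} is dual. For a surjection $e$, we have $e\cdot e^\circ=1_Y$, so $\ftF e\cdot \extF(e^\circ)=1_{\ftF Y}$. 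Composing on the left with $(\ftF e)^\circ$ gives
\begin{equation*}
  \extF(e^\circ)\le (\ftF e)^\circ\cdot\ftF e\cdot \extF(e^\circ) = (\ftF e)^\circ,
\end{equation*}
using $1\le (\ftF e)^\circ\cdot \ftF e$.

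For item~\ref{p:24}, write any relation $r\colon X\relto Y$ as $r=\pi_2\cdot\pi_1^\circ$ via its canonical span. Then factor $\pi_1\colon R\to X$ as a surjection $e\colon R\to I$ followed by an injection $m\colon I\to X$, where $I$ is the image. This gives $\pi_1^\circ=e^\circ\cdot m^\circ$, and hence
\begin{equation*}
  \extF r=\ftF\pi_2\cdot \extF(e^\circ)\cdot\extF(m^\circ).
\end{equation*}
So $\extF r$ is determined by $\ftF$ together with the values of $\extF$ on converses of injections and surjections, as claimed.

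The only point needing care is the one flagged at the start: avoiding any appeal to monotonicity of $\extF$. The arguments above route every inequality through $\ftF$-images, which are functions, so no step relies on it.
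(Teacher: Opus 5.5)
Your proof is correct and follows essentially the same route as the paper. Items~1 and~2 rest on $i^\circ\cdot i=1_X$ and $e\cdot e^\circ=1_Y$, combined with $\ftF i\cdot(\ftF i)^\circ\le 1$ and $1\le(\ftF e)^\circ\cdot\ftF e$, so the only monotonicity used is that of relational composition, never that of the extension. Item~3 uses the canonical span together with the image factorization of its first leg, exactly as in the paper.
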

\begin{proof}
        \quad
        \begin{enumerate}
                \item
                        Let \(i \colon X \to Y\) be an injective function.
                        Then \(i^\circ \cdot i = 1_X\).
                        Therefore, 
                        \(
                                (\ftF i)^\circ
                                = \extF (i^\circ \cdot i) \cdot (\ftF i)^\circ
                                = \extF  (i^\circ) \cdot \ftF i \cdot (\ftF i)^\circ \leq \extF  (i^\circ)
                        \).
                \item
                        Let \(e \colon X \to Y\) be a surjective function.
                        Then \(e \cdot e^\circ = 1_Y\).
                        Therefore, 
                        \(
                                \extF (e^\circ) \leq (\ftF e)^\circ \cdot \ftF e \cdot \extF  (e^\circ)
                                = (\ftF e)^\circ \cdot \extF (e \cdot e^\circ)
                                = (\ftF e)^\circ
                        \).
                \item
                        Let $r \colon X \relto Y$ be a relation with canonical factorization $\rho \cdot \pi^\circ$, and let $\pi = i \cdot e$ be the image factorization of $\pi$.
                        Then,
                        \(
                                \extF(\rho \cdot e^\circ \cdot i^\circ)
                                = \ftF \rho \cdot \extF(e^\circ) \cdot \extF(i^\circ).
                        \)
                        Therefore,~$\extF r$ is completely determined by $\extF (i^\circ)$ and $\extF(e^\circ)$, and~$e$ is surjective and~$i$ is injective.
                        \qed
        \end{enumerate}
        \noqed
\end{proof}
We now turn to locally monotone extensions.
In the sequel, given a set $X$,
let $\nabla_X \colon X + X \to X$
denote the codiagonal, and
let $\rho_1 \colon X \to X + X$ and
$\rho_2 \colon X \to X +X$ denote
the injections into the first and second components, respectively.

In the next result, we will see that local monotonicity can be detected either by inspecting the action of an extension on converses of injections, or on converses of surjections.

\begin{lemma}
        \label{p:6}
        The following statements are equivalent:
        \begin{enumerate}
                \item
                        \label{p:8}
                        The extension \(\extF \) is locally monotone.
                \item
                        \label{p:9}
                        For every relation \(r \colon X \relto X\),
                        \(
                                r \leq 1_X \Rightarrow \extF  r \leq 1_{\ftF X}
                        \).
                \item
                        \label{p:4}
                        For every relation \(r \colon X \relto X\),
                        \(
                                r \geq 1_X \Rightarrow \extF  r \geq 1_{\ftF X}
                        \).
                \item
                        \label{p:10}
                        For every injective function \(i \colon X \to Y\),
                        \(
                                \extF  (i^\circ) = (\ftF i)^\circ
                        \).
                \item
                        \label{p:5}
                        For every surjective function \(e \colon X \to Y\),
                        \(\extF (e^\circ) = (\ftF e)^\circ\).
                \item
                        \label{p:23}
                        For every set $X$,
                        \(\ftF \rho_1 \leq \extF (\nabla_X^\circ)\).
        \end{enumerate}
\end{lemma}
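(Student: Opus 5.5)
We prove a cycle of implications, using Lemma~\ref{p:1} throughout. Two easy facts will be used repeatedly. First, for a relation $r\le 1_X$, i.e.\ a partial identity on a subset $A\subseteq X$ with inclusion $i\colon A\to X$, we have $r=i\cdot i^\circ$. Second, for $r\ge 1_X$ we can factor $r=\pi_2\cdot\pi_1^\circ$ through its canonical span, where $\pi_1$ is surjective (since $r$ is total) and both projections have a common section $\delta\colon X\to r$, $x\mapsto(x,x)$.

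\emph{(\ref{p:8})$\Rightarrow$(\ref{p:9}),(\ref{p:4}).} This is immediate, since $\extF 1_X=1_{\ftF X}$.

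\emph{(\ref{p:9})$\Rightarrow$(\ref{p:10}).} Let $i$ be injective. Then $i\cdot i^\circ\le 1_Y$, so $\ftF i\cdot\extF(i^\circ)=\extF(i\cdot i^\circ)\le 1$. Precomposing with $(\ftF i)^\circ$ gives
\[
\extF(i^\circ)\le(\ftF i)^\circ\cdot\ftF i\cdot\extF(i^\circ)\le(\ftF i)^\circ ,
\]
which uses only that $\ftF i$ is a function. The reverse inequality is Lemma~\ref{p:1}(\ref{p:11}).

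\emph{(\ref{p:4})$\Rightarrow$(\ref{p:5}).} Let $e$ be surjective. Then $e^\circ\cdot e\ge 1_X$, so $\extF(e^\circ)\cdot\ftF e\ge 1$. Postcomposing with $(\ftF e)^\circ$ gives
\[
\extF(e^\circ)\ge\extF(e^\circ)\cdot\ftF e\cdot(\ftF e)^\circ\ge(\ftF e)^\circ .
\]
Here we use that $\ftF e$ is surjective, because $e$ is split epi. The reverse inequality is Lemma~\ref{p:1}(\ref{p:2}).

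\emph{(\ref{p:10})$\Rightarrow$(\ref{p:8}) and (\ref{p:5})$\Rightarrow$(\ref{p:8}).} If both (\ref{p:10}) and (\ref{p:5}) hold, the proof of Lemma~\ref{p:1}(\ref{p:24}) gives $\extF r=\ftF\rho\cdot(\ftF e)^\circ\cdot(\ftF i)^\circ=\ftF\rho\cdot(\ftF\pi)^\circ=\ftbF r$. The Barr formula is monotone as soon as it is evaluated along canonical spans: if $r\le r'$, then $R\subseteq R'$ as sets and the projections of $R$ factor through those of $R'$, so $\ftbF r\le\ftbF r'$. Hence it suffices to derive (\ref{p:5}) from (\ref{p:10}), and (\ref{p:10}) from (\ref{p:5}).

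For (\ref{p:10})$\Rightarrow$(\ref{p:5}): for surjective $e$, choose a section $s$ with $e\cdot s=1_Y$. Then $e^\circ\ge s$, and we want to compare $\extF(e^\circ)$ with $(\ftF e)^\circ$. Using only injection data, write $e^\circ=\nabla\cdots$; the cleaner route is to pass through (\ref{p:23}), as follows.

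\emph{(\ref{p:5})$\Rightarrow$(\ref{p:23}).} Since $\nabla_X$ is surjective, $\extF(\nabla_X^\circ)=(\ftF\nabla_X)^\circ\ge\ftF\rho_1$, because $\ftF\nabla_X\cdot\ftF\rho_1=1$.

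\emph{(\ref{p:23})$\Rightarrow$(\ref{p:4}).} Given $r\ge 1_X$, note that $\rho_1^\circ\cdot\nabla_X^\circ\ge$ \ldots. Rather than this, write $1_X\le r$ as follows. Let $t=[1_X,r]^{\!\circ}$-type composite: $r=[1,r]\cdot\nabla_X^\circ$ up to union, i.e.\ $r=\nabla\cdot(1+r)\cdot\nabla^\circ$ holds when $r\ge1$, where $1+r$ is the coproduct of relations. Realise $1+r$ functorially via its canonical span and $\nabla$. Then, using (\ref{p:23}),
\[
\extF r=\ftF\nabla\cdot\extF(1+r)\cdot\extF(\nabla^\circ)\ge\ftF\nabla\cdot\extF(1+r)\cdot\ftF\rho_1 .
\]
The crucial identity is $(1+r)\cdot\rho_1=\rho_1$, which is a function, so $\extF(1+r)\cdot\ftF\rho_1=\extF((1+r)\cdot\rho_1)=\ftF\rho_1$. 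Hence $\extF r\ge\ftF(\nabla\cdot\rho_1)=1$.

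This cycle, (\ref{p:8})$\Rightarrow$(\ref{p:9})$\Rightarrow$(\ref{p:10}) together with (\ref{p:8})$\Rightarrow$(\ref{p:4})$\Rightarrow$(\ref{p:5})$\Rightarrow$(\ref{p:23})$\Rightarrow$(\ref{p:4}), still requires linking (\ref{p:10}) and (\ref{p:5}). I expect this to be the main obstacle.

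For (\ref{p:10})$\Rightarrow$(\ref{p:4}), I would use a similar trick with partial identities. For $r\ge1$, we have $\rho_1^\circ\cdot\ldots$; concretely, $1_X=\rho_1^\circ\cdot\rho_1$ and $\rho_1\le u\cdot\rho_1$ for $u$ built from $r$ as above. Then (\ref{p:10}) gives $\extF(\rho_1^\circ)=(\ftF\rho_1)^\circ$, which lets us transport the inequality.

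For (\ref{p:5})$\Rightarrow$(\ref{p:9}), I would use the dual trick: write a partial identity $r\le1$ as $r=\nabla\cdot(r+0)\cdot\rho_1$ and exploit $\extF(\nabla^\circ)=(\ftF\nabla)^\circ$.

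With (\ref{p:9})$\wedge$(\ref{p:4}) in hand, both (\ref{p:10}) and (\ref{p:5}) hold, and local monotonicity follows by the Barr-formula argument above.
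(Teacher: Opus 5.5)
Several of your individual steps are sound. (1)$\Rightarrow$(2),(3), (2)$\Rightarrow$(4), (3)$\Rightarrow$(5) and (5)$\Rightarrow$(6) match the paper's. Your (6)$\Rightarrow$(3), via $r=\nabla_X\cdot(1+r)\cdot\nabla_X^\circ$ and $(1+r)\cdot\rho_1=\rho_1$, is correct once the false starts are removed, and it differs from the paper's (6)$\Rightarrow$(2). The step (4)$\wedge$(5)$\Rightarrow$(1) through the Barr formula on canonical spans is also fine.

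But these steps do not close up. None of your arguments derives any of (3), (5), (6) from (2) or (4), or conversely. So none of (2)--(6) is actually shown to imply (1). Both bridging sketches fail:

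In (4)$\Rightarrow$(3), ``transporting'' $\rho_1\le u\cdot\rho_1$ through $\extF$ would mean $\ftF\rho_1=\extF\rho_1\le\extF(u\cdot\rho_1)$. That is itself an instance of the local monotonicity you are trying to prove. Knowing $\extF(\rho_1^\circ)$ gives no control over $\extF u$.

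In (5)$\Rightarrow$(2), the factorization $r=\nabla_X\cdot(r+0)\cdot\rho_1$ contains no converse of a surjection, so (5) has nothing to act on. You only get $\extF r=\ftF\nabla_X\cdot\extF(r+0)\cdot\ftF\rho_1$, with $\extF(r+0)$ unknown.

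The missing idea is the pattern you already used in (6)$\Rightarrow$(3): push only \emph{equations} through $\extF$, and apply inequalities outside it. For (4)$\Rightarrow$(6), use the equation $\rho_1^\circ\cdot\nabla_X^\circ=1_X$, which you began to write and then abandoned:
\[
\ftF\rho_1=\ftF\rho_1\cdot\extF(\rho_1^\circ\cdot\nabla_X^\circ)=\ftF\rho_1\cdot(\ftF\rho_1)^\circ\cdot\extF(\nabla_X^\circ)\le\extF(\nabla_X^\circ).
\]
For (6)$\Rightarrow$(2), given $r\le 1_X$, let $\phi\colon X\to X+X$ send $x$ to $\rho_1(x)$ if $x\mathrel{r}x$ and to $\rho_2(x)$ otherwise. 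Then $r=\phi^\circ\cdot\rho_1$ and $\phi^\circ\cdot\nabla_X^\circ=1_X$, hence $\extF r=\extF(\phi^\circ)\cdot\ftF\rho_1\le\extF(\phi^\circ)\cdot\extF(\nabla_X^\circ)=1_{\ftF X}$.

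With these two bridges, your own steps give (4)$\Rightarrow$(6)$\Rightarrow$(3)$\Rightarrow$(5) and (5)$\Rightarrow$(6)$\Rightarrow$(2)$\Rightarrow$(4), and (1) follows.

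Moreover, (5) is not needed for (1). Suppose $r\le s$, $s=q\cdot p^\circ$ is the canonical factorization through $S$, and $i\colon R\hookrightarrow S$ is the inclusion. Then $r=q\cdot i\cdot i^\circ\cdot p^\circ$, so (4) alone gives $\extF r=\ftF q\cdot\ftF i\cdot(\ftF i)^\circ\cdot\extF(p^\circ)\le\ftF q\cdot\extF(p^\circ)=\extF s$. This is how the paper arranges its proof: the cycle (1)$\Rightarrow$(2)$\Rightarrow$(4)$\Rightarrow$(1), together with (4)$\Rightarrow$(6)$\Rightarrow$(2) and (1)$\Rightarrow$(3)$\Rightarrow$(5)$\Rightarrow$(6).
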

\begin{proof}
        We begin by showing
        \ref{p:8}
        \(\Rightarrow\) \ref{p:9}
        \(\Rightarrow\) \ref{p:10}
        \(\Rightarrow\) \ref{p:8},
        then we will show
        \ref{p:10}
        \(\Rightarrow\) \ref{p:23}
        \(\Rightarrow\) \ref{p:9},
        and
        \ref{p:8}
        \(\Rightarrow\) \ref{p:4}
        \(\Rightarrow\) \ref{p:5}
        \(\Rightarrow\) \ref{p:23}.
                \begin{itemize}[wide]
                \item[\ref{p:8} \(\Rightarrow\) \ref{p:9}] 
                        Trivial.
                \item[\ref{p:9} \(\Rightarrow\) \ref{p:10}]
                        Let \(i \colon X \to Y\) be an injective function.
                        From \Cref{p:1} we have 
                        \((\ftF i)^\circ \leq\extF (i^\circ)\).
                        To see that the converse holds,
                        note that \(i \cdot i^\circ \leq 1_X\).
                        Hence, 
                        \(
                                \ftF i \cdot\extF  (i^\circ)
                                = \extF  (i \cdot i^\circ) \leq \extF  1_X
                                = 1_{\ftF X}
                        \).
                        Therefore, 
                        \(
                                \extF  (i^\circ)
                                \leq (\ftF i)^\circ \cdot  \ftF i \cdot\extF  (i^\circ)
                                \leq (\ftF i)^\circ
                        \).
                \item[\ref{p:10} \(\Rightarrow\) \ref{p:8}]
                        Let \(r,s \colon X \relto Y\) be relations such that \(r \leq s\), and
                        let $X \xleftarrow{\pi_r} R \xrightarrow{\rho_r} Y$ and
                        $X \xleftarrow{\pi_s} S \xrightarrow{\rho_s} Y$ be the
                        spans corresponding to the canonical factorizations of $r$ and $s$,
                        respectively.
                        Then, as \(r \leq s\),
                        \(\pi_r = \pi_s \cdot i\) and  
                        \(\rho_r = \rho_s \cdot i\), 
                        where \(i \colon R \hookrightarrow S\) denotes the inclusion map.
                        Hence,
                        \(
                                \extF  r
                                = \extF  (\rho_r \cdot {\pi_r}^\circ)
                                = \ftF \rho_r \cdot \extF ({\pi_r}^\circ)
                                = \ftF \rho_s \cdot \ftF i \cdot \extF (i^\circ) \cdot \extF ({\pi_s}^\circ)
                        \).
                        Therefore, as 
                        \(\extF (i^\circ)  = (\ftF i)^\circ\) and 
                        \(\ftF i \cdot (\ftF i)^\circ \leq 1_S\),
                        \(\extF  r \leq \ftF \rho_s \cdot \extF  ({\pi_s}^\circ) = \extF  s\).
                \item[\ref{p:10} \(\Rightarrow\) \ref{p:23}]
                        Let $X$ be a set.
                        By definition, $1_X = \nabla_X \cdot \rho_1 $,
                        which entails
                        $ 1_X = \rho_1^\circ \cdot \nabla_X^\circ$.
                        Hence,
                        \(
                                \ftF \rho_1
                                = \ftF \rho_1 \cdot \extF(\rho_1^\circ \cdot \nabla_X^\circ)
                                = \ftF \rho_1 \cdot \extF(\rho_1^\circ) \cdot \extF(\nabla_X^\circ)
                                = \ftF \rho_1 \cdot (\ftF \rho_1)^\circ \cdot \extF(\nabla_X^\circ)
                                \leq \extF(\nabla_X^\circ),
                        \)
                        where the last equality holds by hypothesis since $\rho_1$ is injective.
                \item [\ref{p:23} \(\Rightarrow\) \ref{p:9}]
                         Let \(r \colon X \relto X\) be a relation such that \(r \leq 1_X\).
                         Consider the function
                         $\phi \colon X \to X + X$ that sends $x \in X$ to $\rho_1(x)$,
                         if $x\, r\, x$, and to $\rho_2(x)$, otherwise.
                         Then, \(\phi^\circ \cdot \rho_1 = r\) and $\nabla_X \cdot \phi = 1_X$.
                        Hence, 
                        \(\extF (\phi^\circ) \cdot\ftF \rho_1 = \extF  r\) and
                        \(\extF  (\phi^\circ) \cdot\extF (\nabla_X^\circ) = 1_{\ftF X}\).
                        Therefore, by hypothesis, $\extF  r \leq 1_{\ftF X}$.
                \item[\ref{p:8} \(\Rightarrow\) \ref{p:4}]
                        Trivial.
                \item[\ref{p:4} \(\Rightarrow\) \ref{p:5}]
                        Let \(e \colon X \to Y\) be a surjective function.
                        From \Cref{p:1} we have
                        \(\extF (e^\circ) \leq (\ftF e)^\circ\).
                        To see that the reverse inequality holds,
                        note that \(e^\circ \cdot e \geq 1_X\).
                        Hence,
                        \(
                                \extF (e^\circ) \cdot\ftF e
                                = \extF  (e^\circ \cdot e)
                                \geq \extF  1_X = 1_{\ftF X}
                        \).
                        Therefore,
                        \(
                                \extF (e^\circ)
                                \geq \extF (e^\circ) \cdot\ftF e \cdot (\ftF e)^\circ
                                \geq (\ftF e)^\circ
                        \).
                \item[\ref{p:5} \(\Rightarrow\) \ref{p:23}]
                        Let $X$ be a set.
                        Then, as $\nabla_X$ is surjective and $\nabla_X \cdot \rho_1 = 1_X$,
                        by hypothesis,
                        \(
                                (\ftF \rho_1)^\circ \cdot \extF (\nabla_X^\circ)
                                = (\ftF \rho_1)^\circ \cdot (\ftF \nabla_X)^\circ
                                = 1_{\ftF X}
                        \).
                        Hence, by adjointness, $\ftF \rho_1 \leq \extF (\nabla_X^\circ)$.
                        \qed
        \end{itemize}
        \noqed
\end{proof}
As a consequence, we obtain the following result due to Carboni et al \cite{CKW91} (see also \cite{KV16}).

\begin{theorem}
        \label{p:12}
        A functor admits a locally monotone extension if and only if it preserves weak pullbacks, and in this case such an extension is unique, namely its  Barr extension.
\end{theorem}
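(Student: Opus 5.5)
Three things have to be established: uniqueness, necessity of weak pullback preservation, and sufficiency (the Barr extension works).

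\emph{Uniqueness.} Suppose $\extF$ is a locally monotone extension of $\ftF$. By \Cref{p:6}, items \ref{p:10} and \ref{p:5}, we have $\extF(i^\circ)=(\ftF i)^\circ$ for every injection $i$ and $\extF(e^\circ)=(\ftF e)^\circ$ for every surjection $e$. Take $r\colon X\relto Y$ with canonical factorization $\rho\cdot\pi^\circ$, and write $\pi=i\cdot e$ for the image factorization. The computation in \Cref{p:1}\ref{p:24} then gives
\[
\extF r=\ftF\rho\cdot(\ftF e)^\circ\cdot(\ftF i)^\circ=\ftF\rho\cdot(\ftF\pi)^\circ=\ftbF r .
\]
So every locally monotone extension coincides with the Barr extension.

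\emph{Necessity.} Assume $\ftbF$ is a locally monotone extension, and take a weak pullback square $f\cdot p=g\cdot q$ with apex $A$. The weak pullback property is equivalent to the identity $g^\circ\cdot f=q\cdot p^\circ$ of relations $B\relto C$. The inclusion $\geq$ holds for any commuting square; $\leq$ says exactly that every pair $(b,c)$ with $f(b)=g(c)$ lifts to $A$. Applying $\ftbF$ and using functoriality together with $\ftbF(h^\circ)=(\ftF h)^\circ$, we get
\[
(\ftF g)^\circ\cdot\ftF f=\ftF q\cdot(\ftF p)^\circ,
\]
which says precisely that the $\ftF$-image of the square is a weak pullback. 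The key fact needed here is $\ftbF(h^\circ)=(\ftF h)^\circ$ for arbitrary maps $h$, which follows from the Barr definition using the span $(h,1)$.

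\emph{Sufficiency.} Assume $\ftF$ preserves weak pullbacks. Local monotonicity of $\ftbF$ is immediate: if $r\le s$, the canonical span of $r$ factors through that of $s$ via an inclusion, so $\ftbF r\le\ftbF s$. Preservation of graphs holds because $f=f\cdot 1^\circ$. Well-definedness, meaning independence of the chosen span, reduces to the composition law: any span $(f,g)$ factors as an epi followed by the canonical span, and $\ftF$ preserves surjections. For composition, let $r=g\cdot f^\circ$ and $s=k\cdot h^\circ$, and form the pullback of $g$ and $h$ with projections $p,q$. Then $s\cdot r=(k q)\cdot(f p)^\circ$, because $h^\circ\cdot g=q\cdot p^\circ$. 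Since $\ftF$ maps this pullback to a weak pullback, $(\ftF h)^\circ\cdot\ftF g=\ftF q\cdot(\ftF p)^\circ$, and therefore $\ftbF s\cdot\ftbF r=\ftbF(s\cdot r)$.

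The main obstacle is this composition step, together with well-definedness. It is the classical Barr/Trnková argument, and I would cite \cite{Bar70,Trn80} for it rather than redo the details.
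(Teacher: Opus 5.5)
Your proof is correct and takes the paper's route. Uniqueness comes from \Cref{p:6} (items \ref{p:10} and \ref{p:5}) together with the image-factorization computation of \Cref{p:1}\ref{p:24}. The equivalence with weak pullback preservation is the classical Barr/Trnkov\'a result, which the paper only cites and you write out correctly, via the relational characterization $g^\circ\cdot f=q\cdot p^\circ$ of weak pullbacks.

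One small correction: well-definedness of $\ftbF$ does not ``reduce to the composition law'' and needs no weak pullback hypothesis. Your surjection-splitting argument proves it for every $\ftF$, and that is exactly what justifies using the span $(h,1)$ in your necessity step.
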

\begin{proof}
        In \Cref{p:1} we have seen that every extension is completely determined by its action on converses of injective and surjective maps, while in \Cref{p:6} we have seen that every locally monotone extension agrees on such relations with the Barr extension, which is exclusive to functors that preserve weak pullbacks.
\qed\end{proof}
To construct multiple extensions for a weak pullback-preserving
functor, one can think of modifying its Barr extension slightly.
Indeed, in a similar context, it has been shown that it is possible to
slightly enlarge  the Barr extension of an exponential functor to
obtain other \emph{normal lax extensions} \cite{GHN+25b}.
 However, this fails for (strict) extensions,  since there is no extension comparable with the Barr extension (other than itself) w.r.t. the pointwise order.
 
\begin{corollary}
        \label{p:50}
        Suppose that $\ftF$ preserves weak pullbacks.
        If $\ftbF \leq \extF $ or $\ftbF \geq \extF $, then $\ftbF = \extF $.
\end{corollary}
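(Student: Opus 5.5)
The plan is to reduce both cases to \Cref{p:6} and \Cref{p:12}. Since $\ftF$ preserves weak pullbacks, its Barr extension $\ftbF$ is a locally monotone extension, and \Cref{p:6} applies to it. In particular, $\ftbF(i^\circ) = (\ftF i)^\circ$ for every injection $i$, and $\ftbF(e^\circ) = (\ftF e)^\circ$ for every surjection $e$. By \Cref{p:1}\ref{p:24}, $\extF$ is determined by its values on such converses. So it suffices to show that $\extF$ agrees with $\ftbF$ on them. By \Cref{p:12} it would even suffice to show that $\extF$ is locally monotone.

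Suppose first that $\ftbF \leq \extF$. I will verify condition \ref{p:5} of \Cref{p:6} for $\extF$. Let $e$ be surjective. Then
\[ (\ftF e)^\circ = \ftbF(e^\circ) \leq \extF(e^\circ), \]
and \Cref{p:1}\ref{p:2} gives the reverse inequality $\extF(e^\circ) \leq (\ftF e)^\circ$. Hence $\extF(e^\circ) = (\ftF e)^\circ$. By \Cref{p:6}, $\extF$ is then locally monotone, and by the uniqueness part of \Cref{p:12}, $\extF = \ftbF$.

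Suppose instead that $\ftbF \geq \extF$. Now I will verify condition \ref{p:10}. For an injection $i$, \Cref{p:1}\ref{p:11} gives
\[ (\ftF i)^\circ \leq \extF(i^\circ) \leq \ftbF(i^\circ) = (\ftF i)^\circ, \]
so equality holds throughout. Again \Cref{p:6} makes $\extF$ locally monotone, and \Cref{p:12} yields $\extF = \ftbF$.

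There is no real obstacle here. The only point needing care is to pair each inequality with the matching half of \Cref{p:1}: the lower bound $\ftbF \leq \extF$ combines with the upper bound of part \ref{p:2} for surjections, and the upper bound $\extF \leq \ftbF$ combines with the lower bound of part \ref{p:11} for injections. One also needs the standard fact, cited before \Cref{p:12}, that $\ftbF$ is a locally monotone extension when $\ftF$ preserves weak pullbacks.
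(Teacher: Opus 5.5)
Your proposal is correct and matches the paper's proof. For $\ftbF \leq \extF$ you squeeze $\extF(e^\circ)$ between $\ftbF(e^\circ) = (\ftF e)^\circ$ and the upper bound from \Cref{p:1}. Your injection argument for $\extF \leq \ftbF$ is exactly the case the paper dismisses as ``analogous'', and in both cases \Cref{p:6} and \Cref{p:12} finish the proof.
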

\begin{proof}
        Suppose that $\ftbF \leq \extF$.
        Then, by \Cref{p:1}, for every surjection $e$,
        $(\ftF e)^\circ = \ftbF(e^\circ) \leq \extF(e^\circ) \leq (\ftF e)^\circ$.
        Therefore, by \Cref{p:6}, $\extF$ is locally monotone, from which we conclude that it coincides with the Barr extension due to \Cref{p:12}.
        The other case follows analogously.
\qed\end{proof}
This suggests that it may be difficult to find examples of functors that admit multiple extensions.
 In fact, we will see that the large class of functors introduced next admits at most one extension.
 We first fix some notation.
    
 We use calligraphic letters, such as $\calK$ and $\calJ$,
 to denote $\SET$-indexed families of sets, with the exception being constant families which may be represented by the set that determines them, and
we write $\calK \leq \calJ$ if $\calK$ is included componentwise in $\calJ$.
Furthermore, given a subset $A$ of $X$, and an element $a \in \ftF X$,
we write $\fx \in \ftF A$ to denote that $a \in \ftF i_A [\ftF A]$,
where $i_A \colon A \rightarrowtail X$ is the corresponding inclusion.
The set $\{0,1\}$ is denoted by $2$.

\begin{defn}
  Let $\calK$ be a $\SET$-indexed family of sets such that
  $2 \leq \calK$.  The functor $\ftF$ is \emph{elementwise
    $\calK$-bounded} if for every set $X$ and every
  \(\fx \in \ftF (X \times \calK_X)\) there is a set
  \(A \subseteq X \times \calK_X\) such that \(\fx \in \ftF A\)
  and~$A$ does not contain \(\{x\} \times \calK_X\) for any
  \(x \in X\).  The functor $\ftF$ is \emph{elementwise bounded} if it
  is elementwise $\calK$-bounded for some family $\calK$.
\end{defn}
\noindent 
Intuitively, we regard $X \times \calK_X$ as providing $\calK_X$-many
\emph{copies} of every element $x\in X$. Then, $\ftF$ is elementwise
$\calK$-bounded if given a set $X$, every element of
$\ftF(X \times \calK_X)$ can be constructed without ever needing to
use all copies of any element of~$X$.

\begin{remark}\label{rem:trivial}
  The condition defining elementwise boundedness is vacuously
  satisfied on the empty set.
\end{remark}

\begin{example}
        \label{p:32}
        The following functors are elementwise bounded.
        \begin{enumerate}
                \item
                        Every constant functor $S$ is elementwise  $2$-bounded since,
                        for every set $X$,
                        each element of $S(X \times 2)$ belongs to $\ftF\varnothing$.
                \item
                        Every exponentiable functor $(-)^S$ for a non-empty exponent is elementwise ($S+1$)-bounded since,
                        for every set $X$,
                        each function from $S$ to $X \times (S+1)$   selects at most~$|S|$ copies of each element of $X$.
                \item
                        Every functor ``sum with a fixed set $S$'' $(- + S)$  is elementwise $2$-bounded since, for every set $X$, each element of $X \times 2 +S$ belongs to the image under $(-)+S$ of the empty set or of a singleton.
                \item
                        Every functor that preserves binary coproducts is elementwise  $2$-bounded.
                        To see this, note that given a set $X$, as \(X \times 2 \simeq X + X\), if \(\ftF\) preserves coproducts, every element of \( \ftF(X \times 2)\) either belongs to
        \(\ftF(X \times \{0\})\) or to \(\ftF(X \times \{1\})\).
        
                        In particular, every functor $A\times(-)$ ``product with a fixed set'' is elementwise $2$-bounded, and  
                        the ultrafilter functor, which is \emph{not} accessible,
                        is elementwise $2$-bounded.
                \item
                        Every $\lambda$-accessible functor is elementwise $\lambda$-bounded.
                        Indeed, suppose that~$\ftF$ is $\lambda$-accessible.
                        Let $X$ be a set.
                        Then, for every element \(\fx \in \ftF (X \times \lambda)\)
                        there is \(A \subseteq X \times \lambda\) with \(|A| < \lambda\)
                        such that \(\fx \in \ftF A\).
                        Moreover, as \(|A| < \lambda\), for every \(x \in X\),
                        \(\{x\} \times \lambda \not\subseteq A\).
                        
                        In particular, all functors of \Cref{p:28} are elementwise bounded.
        \end{enumerate}
\end{example}
\Cref{p:32} suggests that most functors used for coalgebraic modelling are elementwise $\kappa$-bounded for some set~$\kappa$. In many cases, elementwise boundedness can be realized by a fixed cardinal:

\begin{proposition}
        Consider the following statements.
        \begin{enumerate}
                \item
                        \label{p:41}
                        $\ftF$ is elementwise bounded.
                \item
                        \label{p:42}
                        There is a set $\kappa$ of cardinality greater than one with the property that for every $\fx \in \ftF \kappa$ there is $A \subsetneq \kappa$ such that $\fx \in \ftF A$.
        \end{enumerate}
        Then, \ref{p:41} $\Rightarrow$ \ref{p:42}, and when $\ftF$ preserves binary products \emph{or} inverse images, \ref{p:42} $\Rightarrow$ \ref{p:41}, in which case $\ftF$ is elementwise $\kappa$-bounded.
\end{proposition}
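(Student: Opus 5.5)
The implication \ref{p:41} $\Rightarrow$ \ref{p:42} should follow by specialising elementwise boundedness to a singleton. The converse should follow by pulling the witness for $\kappa$ back along the projection $X\times\kappa\to\kappa$, using either hypothesis on $\ftF$.

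\emph{\ref{p:41} $\Rightarrow$ \ref{p:42}.} Suppose $\ftF$ is elementwise $\calK$-bounded, and take $X=1=\{*\}$. Put $\kappa=\calK_1$, identified with $1\times\calK_1$ via the evident bijection. Since $2\leq\calK$, we have $|\kappa|\geq 2$. Given $\fx\in\ftF\kappa = \ftF(1\times\calK_1)$, elementwise boundedness yields $A\subseteq 1\times\calK_1$ with $\fx\in\ftF A$ and $\{*\}\times\calK_1\not\subseteq A$. That is, $A\subsetneq\kappa$, which is exactly condition \ref{p:42}.

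\emph{\ref{p:42} $\Rightarrow$ \ref{p:41}.} Let $\kappa$ be as in \ref{p:42}. Condition \ref{p:42} is invariant under bijections $\kappa\cong\kappa'$, by naturality of $\ftF$ applied to the bijection and its restrictions to subsets. So, renaming elements, I may assume $2\subseteq\kappa$ and regard $\kappa$ as a constant family with $2\leq\kappa$. I will show that $\ftF$ is elementwise $\kappa$-bounded.

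Fix $X$ and $\fx\in\ftF(X\times\kappa)$, and let $\pi_2\colon X\times\kappa\to\kappa$ be the projection. By \ref{p:42} there is $A_0\subsetneq\kappa$ with $\ftF\pi_2(\fx)\in\ftF A_0$. Choose $k\in\kappa\setminus A_0$ and set $A=X\times A_0$. Then $A$ omits $(x,k)$ for every $x$, so it contains no fibre $\{x\}\times\kappa$. It remains to show $\fx\in\ftF A$, and this is where the hypothesis on $\ftF$ enters.

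\emph{Inverse images.} The square with $X\times A_0\hookrightarrow X\times\kappa$, $\pi_2$, the restricted projection $X\times A_0\to A_0$, and $i_{A_0}\colon A_0\hookrightarrow\kappa$ is a pullback along an injection. Its image under $\ftF$ is therefore a pullback. The pair consisting of $\fx$ and a preimage of $\ftF\pi_2(\fx)$ in $\ftF A_0$ thus yields an element of $\ftF(X\times A_0)$ mapping to $\fx$.

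\emph{Binary products.} The comparison maps $\ftF(X\times\kappa)\to\ftF X\times\ftF\kappa$ and $\ftF(X\times A_0)\to\ftF X\times\ftF A_0$ are bijections, natural in the second factor. Under them, $\fx$ corresponds to $(\ftF\pi_1\fx,\ftF\pi_2\fx)$. Since $\ftF\pi_2\fx=\ftF i_{A_0}(b)$ for some $b\in\ftF A_0$, the element of $\ftF(X\times A_0)$ corresponding to $(\ftF\pi_1\fx,b)$ maps to $\fx$ under $\ftF(1_X\times i_{A_0})$.

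I expect the main obstacle to be bookkeeping rather than substance. One point is the degenerate cases $X=\varnothing$, covered by \Cref{rem:trivial}, and $A_0=\varnothing$. In the product case the argument above does not require injectivity of $\ftF i_{A_0}$, so $A_0=\varnothing$ causes no difficulty there. The other point is checking the naturality of the product comparison carefully.
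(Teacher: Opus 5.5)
Your proof is correct and takes the same route as the paper. For the first implication you specialise to $X=1$. For the converse you pull the witness $A_0$ for the element $F\pi_2(a)$ back along the projection $\pi_2\colon X\times\kappa\to\kappa$ to the set $X\times A_0=\pi_2^{-1}[A_0]$, and conclude with preservation of inverse images.

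You also fill in two details the paper leaves implicit, and both are correct. One is the explicit argument for the binary-product case, which the paper just calls trivial. The other is renaming $\kappa$ so that $2\subseteq\kappa$, which the constant family $\kappa$ needs in order to satisfy $2\le\mathcal{K}$.
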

\begin{proof}
        The first claim is immediate from the definition of elementwise boundedness, applied to  $X = 1$.
        To see the second claim, suppose that $\ftF$ preserves inverse images.
        Let $X$ be a set, and
        let $\fx$ be an element of $\ftF (X \times \kappa)$.
        Then, by hypothesis, there is a set $A \subsetneq \kappa$ such that $\ftF \pi_2(a) \in \ftF A$, where $\pi_2 \colon X \times \kappa \to \kappa $ is the corresponding projection,
        This means that $\fx$ belongs to the pullback of $\ftF(X \times \kappa) \xrightarrow{\ftF \pi_2} \ftF \kappa \xleftarrow{\ftF i_A} \ftF A$, where $i_A \colon A \rightarrowtail X$ denotes the corresponding inclusion,
        which is given by $\ftF(\pi_2^{-1}[A])$, because $\ftF$ preserves inverse images.
        Therefore, $\fa \in \ftF(\pi_2^{-1}[A])$, and since $A \subsetneq \kappa$, for every $x \in X$, $\{x\} \times \kappa \not\subseteq \pi_2^{-1}[A]$.
        The case of functors that preserve binary products is trivial.
\qed
\end{proof}
        
\begin{example}
        The following functors fail to be elementwise bounded.
        \begin{enumerate}
                \item The powerset functor $\ftP$, since there is no set $\kappa$ that belongs to the image under $\ftP$ of any of its proper subsets.
                \item The monotone neighbourhood functor $\ftM$, since there is no set $\kappa$ for which the neighbourhood system $\{\kappa\}$ belongs to the image under $\ftM$ of any of its proper subsets.
                \item The neighbourhood functor, by the same reasoning as for the monotone neighbourhood functor.
        \end{enumerate}
\end{example}
Finally, we are ready to show that every functor
that is elementwise bounded admits only locally monotone extensions,
which, as we have seen in \Cref{p:12},
are exclusive to functors that preserve weak pullbacks.

\begin{lemma}
        \label{p:40}
        For every map $e \colon Y \to X$,
        \[
                \ftF e \geq \bigvee \{(\ftF \rho_1)^\circ \cdot \ftF f \mid e \cdot f^\circ = \nabla_X\},
        \]
        where $f$ ranges over all functions from $Y$ to $X + X$.
\end{lemma}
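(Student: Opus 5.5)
Because the join on the right-hand side is taken in the complete lattice of relations $\ftF Y \relto \ftF X$, it suffices to fix a single function $f \colon Y \to X + X$ with $e \cdot f^\circ = \nabla_X$ and prove
\[
  (\ftF \rho_1)^\circ \cdot \ftF f \leq \ftF e .
\]
Nothing about the extension $\extF$ is needed; the inequality concerns $\ftF$ alone.

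The first step is to read off what the hypothesis $e \cdot f^\circ = \nabla_X$ says about $f$. This equation is an equality of relations $X + X \relto X$, and $\nabla_X$ is a total function. In particular, every element $\rho_1(x)$ is related under $e \cdot f^\circ$ to some element, so $f^\circ$ relates $\rho_1(x)$ to some $y$. Hence $\rho_1(x)$ lies in the image of $f$.

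More importantly, the equation forces agreement with $\nabla_X$ along $f$: whenever $f(y) = \rho_1(x)$, the pair $(\rho_1(x), e(y))$ belongs to $e \cdot f^\circ = \nabla_X$, and therefore $e(y) = x$. Symmetrically, $f(y) = \rho_2(x)$ implies $e(y) = x$. So the equation yields the identity of functions $e = \nabla_X \cdot f$. Surjectivity of $f$ is a by-product and is not needed below.

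With $e = \nabla_X \cdot f$ in hand, functoriality gives $\ftF e = \ftF \nabla_X \cdot \ftF f$. Since $\nabla_X \cdot \rho_1 = 1_X$, we also have $\ftF \nabla_X \cdot \ftF \rho_1 = 1_{\ftF X}$. The graph of any function $g$ satisfies $g \cdot g^\circ \leq 1$ (simplicity) and $g^\circ \cdot g \geq 1$ (totality); we use the following consequence, where the inequality is totality of $\ftF \rho_1$:
\[
  (\ftF \rho_1)^\circ
  = (\ftF \rho_1)^\circ \cdot \ftF \nabla_X \cdot \ftF \rho_1 \cdot (\ftF \rho_1)^\circ
  \leq \ldots
\]
This route is awkward. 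It is cleaner to note that a function lies below a relation composed on the left with a converse of functions exactly when the graphs agree, which gives the inequality
\[
  (\ftF \rho_1)^\circ \leq \ftF \nabla_X .
\]

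To verify this inequality, let $\fx \mathrel{(\ftF \rho_1)^\circ} \fb$, meaning $\fx = \ftF \rho_1(\fb)$. Applying $\ftF \nabla_X$ gives $\ftF \nabla_X(\fx) = \ftF(\nabla_X \cdot \rho_1)(\fb) = \fb$, so the pair is in $\ftF \nabla_X$. Composing on the right with $\ftF f$ then yields
\[
  (\ftF \rho_1)^\circ \cdot \ftF f \leq \ftF \nabla_X \cdot \ftF f = \ftF e .
\]

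The only step requiring care is the extraction of $e = \nabla_X \cdot f$ from the relational equation $e \cdot f^\circ = \nabla_X$. There one must check both summands $\rho_1$ and $\rho_2$, and note that the equation is an equality of relations rather than merely an inclusion. I expect this to go through directly by the elementwise argument above, so I do not anticipate a serious obstacle.
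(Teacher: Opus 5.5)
Your proof is correct, and it takes a more elementary route than the paper. The paper appeals to the well-definedness of the Barr construction. Since $e\cdot f^\circ$ and $\nabla_X\cdot 1_{X+X}^\circ$ factor the same relation, it gets $\ftF e\cdot(\ftF f)^\circ=\ftF\nabla_X$. Precomposing with $\ftF\rho_1$ then gives $\ftF e\cdot\bigl((\ftF\rho_1)^\circ\cdot\ftF f\bigr)^\circ=1_{\ftF X}$, from which the inclusion is read off elementwise.

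You instead extract the equation of functions $e=\nabla_X\cdot f$, and then need only functoriality plus the retraction inequality $(\ftF\rho_1)^\circ\leq\ftF\nabla_X$. This avoids the independence-of-factorization result altogether. It also shows that the lemma holds under the weaker hypothesis $e\cdot f^\circ\leq\nabla_X$. Contrary to your closing remark, deriving $e=\nabla_X\cdot f$ uses only this inclusion, since $(f(y),e(y))\in e\cdot f^\circ$ for every $y$.

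The paper's route buys the equality above in exchange. That equality additionally says every element of $\ftF X$ is reached by $(\ftF\rho_1)^\circ\cdot\ftF f$, which is more than the lemma asserts. It also keeps the argument phrased in terms of the Barr construction used throughout the paper.

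On presentation, delete the abandoned display and the sentence after it. The display is ill-typed, since you cannot compose $(\ftF\rho_1)^\circ$ after $\ftF\nabla_X$, and the sentence is garbled. The retraction inequality has the one-line proof $(\ftF\rho_1)^\circ=\ftF\nabla_X\cdot\ftF\rho_1\cdot(\ftF\rho_1)^\circ\leq\ftF\nabla_X$. This uses single-valuedness of $\ftF\rho_1$, not totality as you stated. Your elementwise check is an equally valid substitute.
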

\begin{proof}
        Let $f \colon Y \to X + X$ be a function such that
        $e\cdot f^\circ = \nabla_X$.
        Then, given that, independently of $\ftF$ preserving weak pullbacks,
        the Barr construction is well-defined \cite{Bar70},
        $\ftF e \cdot (\ftF f)^\circ = \ftF \nabla_X$.
        Hence,
        \[
                \ftF e \cdot \big((\ftF \rho_1)^\circ \cdot  \ftF f\big)^\circ
                = \ftF e \cdot (\ftF f)^\circ \cdot \ftF \rho_1
                = \ftF(\nabla_X \cdot \rho_1)
                = 1_{\ftF X}.
        \]
        Therefore, $(\fy,\fx) \in (\ftF\rho_1)^\circ \cdot \ftF f$ entails $\ftF e (\fy) = \fx$.
\qed\end{proof}
The next result is the key ingredient in the proof of our main theorem.
It shows that the action on surjections of functors that are elementwise bounded can be expressed in terms of factorizations of codiagonal maps,
which detect locally monotone extensions,
as we have seen in \Cref{p:6}.

\begin{lemma}
        \label{p:37}
        Consider the following statements.
        \begin{enumerate}
                \item
                        \label{p:35}
                        The functor $\ftF$ is elementwise $\calK$-bounded.
                \item
                        \label{p:36}
                        For every set $X$,
                        there are a set~$Y_X$ and a surjection $e \colon Y_X \twoheadrightarrow X$ such that
                        \begin{equation}
                                \label{p:39}
                                \ftF e
                                = \bigvee\{(\ftF \rho_1)^\circ \cdot \ftF f \mid e \cdot f^\circ = \nabla_X\}
                        \end{equation}
                        where $f$ ranges over all functions from $Y_X$ to $X + X$.
        \end{enumerate}
        Then, \ref{p:35} $\Rightarrow$ \ref{p:36}, and
        when $\ftF$ preserves inverse images,
        \ref{p:36} $\Rightarrow$ \ref{p:35}.
      \end{lemma}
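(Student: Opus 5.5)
For \ref{p:35} $\Rightarrow$ \ref{p:36}, I would take $Y_X = X \times \calK_X$ and $e = \pi_1 \colon X \times \calK_X \twoheadrightarrow X$. This is surjective because $2 \leq \calK$ (for $X = \varnothing$ everything is trivial, cf.\ \Cref{rem:trivial}). The inequality $\geq$ in \eqref{p:39} is \Cref{p:40}, so only $\leq$ remains. Fix $\fx \in \ftF(X \times \calK_X)$; I must produce $f \colon Y_X \to X + X$ with $e \cdot f^\circ = \nabla_X$ and $\ftF f(\fx) = \ftF\rho_1(\ftF e(\fx))$.

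Elementwise boundedness gives $A \subseteq X \times \calK_X$ with $\fx \in \ftF A$ and no fibre $\{x\}\times\calK_X$ contained in $A$. I would first enlarge $A$ to $A'$ so that every fibre meets $A'$ but none is contained in it. For each $x$ whose fibre misses $A$, add $(x,0)$; since $(x,1) \notin A'$, that fibre is still not full. Because $A \subseteq A'$, we still have $\fx \in \ftF A'$, say $\fx = \ftF i_{A'}(\fx')$.

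Now define $f(x,k) = \rho_1(x)$ if $(x,k) \in A'$ and $f(x,k) = \rho_2(x)$ otherwise. Then $f$ lies over $\nabla_X$ fibrewise, and each $\rho_1(x)$ and $\rho_2(x)$ is attained within the fibre over $x$. This is exactly the condition $e \cdot f^\circ = \nabla_X$. Moreover $f \cdot i_{A'} = \rho_1 \cdot e \cdot i_{A'}$, so applying $\ftF$ to $\fx'$ gives $\ftF f(\fx) = \ftF\rho_1(\ftF e(\fx))$. Hence $(\ftF e(\fx), \fx)$ lies in the right-hand join, which gives $\leq$.

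For \ref{p:36} $\Rightarrow$ \ref{p:35} under preservation of inverse images, I would first extract a fibrewise statement about $e \colon Y_X \to X$. Given $\fy \in \ftF Y_X$, \eqref{p:39} yields $f$ with $e\cdot f^\circ = \nabla_X$ and $\ftF f(\fy) \in \ftF\rho_1[\ftF X]$. Since $\ftF$ preserves inverse images and $\rho_1$ is injective, $\fy \in \ftF(f^{-1}[\rho_1 X])$. Because $e \cdot f^\circ = \nabla_X$, each fibre $e^{-1}(x)$ contains a point sent to $\rho_2(x)$. So $B = f^{-1}[\rho_1 X]$ never contains a whole fibre of $e$: every element of $\ftF Y_X$ is supported on a subset missing at least one point of each fibre.

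It remains to transfer this to the product shape $X \times \calK_X$, and this is the step I expect to be the main obstacle. My first attempt would be to choose $\calK_X$ large, e.g.\ $\calK_X = Y_X + 2$ or $\calK_X = Y_{X'}$ for a suitable auxiliary set $X'$. I would then apply the fibrewise statement to the set $X \times \calK_X$ itself, or compare $X \times \calK_X$ with $Y_X$ by a map over $X$. Preservation of inverse images would let me pull the support $B$ back along that map; pulled-back supports then miss a point in each fibre, which is exactly elementwise $\calK$-boundedness. The delicate point is ensuring that such a comparison map exists and still guarantees a missing point in every fibre $\{x\}\times\calK_X$.
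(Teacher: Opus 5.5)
Your proof of \ref{p:35}~$\Rightarrow$~\ref{p:36} is correct and matches the paper's, including enlarging $A$ so that every fibre meets it. In the converse, your extraction of the fibrewise statement for $e$ is also sound. The gap is the step after it, which is where the content of the converse lies. You never construct a comparison between $e\colon Y_X\twoheadrightarrow X$ and $\pi_1\colon X\times\calK_X\to X$. As you suspected, a map over $X$ alone is not enough: pulling a support $B$ back along a map that misses part of some fibre $e^{-1}(x)$ can give a set containing all of $\{x\}\times\calK_X$.

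The missing idea is a map over $X$ that is \emph{onto each fibre}. Take $\calK_X=Y_X\cup 2$. Since $e$ is surjective, each $e^{-1}(x)$ is a non-empty subset of $Y_X$. So there is a surjection $e'_x\colon\calK_X\twoheadrightarrow e^{-1}(x)$: the identity on $e^{-1}(x)$, and a fixed point of $e^{-1}(x)$ elsewhere. Then $e'(x,k)=e'_x(k)$ satisfies $e\cdot e'=\pi_1$ and maps $\{x\}\times\calK_X$ onto $e^{-1}(x)$.

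With $e'$ your pull-back plan goes through. Let $\fa\in\ftF(X\times\calK_X)$. Your fibrewise statement gives $\ftF e'(\fa)\in\ftF B$ for some $B\subseteq Y_X$ that omits a point $y_x\in e^{-1}(x)$ for every $x$. Preservation of the inverse image of $B\rightarrowtail Y_X$ along $e'$ gives $\fa\in\ftF\bigl((e')^{-1}[B]\bigr)$. Finally, $(e')^{-1}[B]$ omits any $(x,k)$ with $e'_x(k)=y_x$, and such $k$ exists by fibrewise surjectivity.

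The paper uses the same kind of map $e'\colon X\times Y_X\twoheadrightarrow Y_X$, but in a different order. It first shows that \eqref{p:39} is stable under precomposition with surjections, using $e'\cdot(e')^\circ=1$ and \Cref{p:40}. It then concludes that $\pi_1=e\cdot e'$ itself satisfies \eqref{p:39}, and runs the fibrewise argument on $\pi_1$ directly, with $\calK_X=Y_X$ after arranging $|Y_X|\ge 2$. Your route avoids the stability lemma at the cost of one more use of inverse-image preservation. Either way, until you construct $e'$ and check that it is onto each fibre, the implication \ref{p:36}~$\Rightarrow$~\ref{p:35} is not proved.
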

      \noindent (Recall that~$\rho_1$ is the first injection $X\hookrightarrow X+X$.)
\begin{proof}
        \ref{p:35} $\Rightarrow$ \ref{p:36}.
        It suffices to show the reverse inequality of \Cref{p:40}.
        To this end,
        let $X$ be a set, and
        let $\pi_1 \colon X \times \calK_X \to X$ be the projection to $X$.
        As $\calK_X$ is non-empty by definition, $\pi_1$ is surjective, and
        we claim that it satisfies \eqref{p:39}.
        To see this, let $\fa \in \ftF(X \times \calK_X)$.
        By hypothesis, there is a set $A \subseteq X \times \calK_X$
        such that $\fx \in \ftF A$, and
        for every $x \in X$, $\{x\} \times \calK_X$ is not contained in $A$.
        Now, consider the function $f \colon X \times \calK_X \to X + X$ defined by
        $f(x,y) = \rho_1(x)$, if $(x,y) \in A$, and
        $f(x,y) = \rho_2(x)$, otherwise.
        Since $2 \leq |\calK_X|$, and $\{x\} \times \calK_X \not\subseteq A$,
        for every $x \in X$,
        we can assume without loss of generality that
        for every $x \in X$ there are $k_1,k_2 \in \calK_X$ such that
        $(x,k_1) \in A$ and $(x,k_2) \notin A$.
                                Hence, $\pi_1 \cdot f^\circ \geq \nabla_X$, and
        as the other inequality holds by definition of $f$,
        we obtain $\pi_1 \cdot f^\circ = \nabla_X$.
        Furthermore, $\rho_1 \cdot \pi_1 \cdot i_A = f \cdot i_A $,
        where $i_A \colon A \rightarrowtail X \times \calK_X$ is the corresponding inclusion. 
        This entails $\ftF \rho_1 \cdot \ftF \pi_1 \cdot \ftF i_A = \ftF f \cdot \ftF i_A$.
        Thus, $\ftF \pi_1 \cdot \ftF i_A \leq (\ftF \rho_1)^\circ \cdot \ftF f \cdot \ftF i_A$.
        Therefore, as $\fx \in \ftF i_A[\ftF A]$ by hypothesis,
        we conclude that $(\fx,\ftF \pi_1(\fx)) \in (\ftF \rho_1)^\circ \cdot \ftF f$.
                
        \ref{p:36} $\Rightarrow$ \ref{p:35}.
        Let $X$ be a set, and
        let $e \colon Y_X \twoheadrightarrow X$ be a surjective map that satisfies \eqref{p:39}.
        First we note that for every surjection $e' \colon Y \twoheadrightarrow Y_X$,
        the surjection $e \cdot e'$ also satisfies \eqref{p:39}.
        Let $e' \colon Y \twoheadrightarrow Y_X$ be a surjective map.
        Then,
        \begin{align*}
                \ftF(e \cdot e')
                &= \bigvee \{(\ftF \rho_1)^\circ \cdot \ftF f \cdot \ftF e' \mid e \cdot f^\circ = \nabla_X\} \\
                &= \bigvee \{(\ftF \rho_1)^\circ \cdot \ftF (f  \cdot e') \mid e \cdot e' \cdot (f \cdot e')^\circ = \nabla_X\}
        \end{align*}
        where $f$ ranges over all functions from $Y_X$ to $X + X$.
        Hence, due to \Cref{p:40},
        \begin{displaymath}
                \ftF (e \cdot e')
                = \bigvee\{(\ftF \rho_1)^\circ \cdot \ftF g \mid e \cdot e' \cdot g^\circ = \nabla_X\},
        \end{displaymath}
        where $g$ ranges over all functions from $Y$ to $X + X$.
        Now, if $X$ is empty, then by \Cref{rem:trivial}. we can simply take $\calK_X = 2$.
                                Hence, suppose that $X$ in non-empty.
        Then, as $e$ is a surjection, $Y_X$ needs to be non-empty,
        and thus, due to our argument on the composite of surjections,
        we can assume without loss of generality that $2 \leq |Y_X|$.
        Next, we choose for each \(x \in X\) a surjective map $e'_x$ from \(Y_X\) to \(e^{-1}(x)\),
        which exists because $e^{-1}(x) \subseteq Y_X$, by definition.
        This yields a  map \(e' \colon X \times Y_X \to Y_X\) that sends $(x,y) \in X \times Y_X$ to $e'_x(y)$,
        which is surjective because $Y = \bigcup_{x \in X} e^{-1}(x)$. 
        Furthermore,  by construction,
        $e \cdot e'$ coincides with the projection $\pi_1 \colon X \times Y_X \to X$.
        Hence, due to our argument on the composite of surjections,
        we conclude that $\pi_1 \colon X \times Y_X \to X$ satisfies \eqref{p:39}.
        Now, let $\fx \in \ftF(X \times Y_X)$.
        Then, there is $f \colon X \times Y_X \to X + X$ such that
        $\pi_1 \cdot f^\circ = \nabla_X$ and
        $(\fx, \ftF\pi_1(\fx)) \in (\ftF \rho_1)^\circ \cdot \ftF f$,
        which means $\ftF f(\fx) = \ftF \rho_1 \cdot \ftF \pi_1(\fx)$.
        Consider the set $A = \{(x,y) \in X \times Y_X \mid f(x,y) \in \rho_1[X]\}$.
        We claim that for every $x \in X$, $\{x\} \times Y_X$ is not contained in $A$,
        and that $\fx\in \ftF A$.
        Let $x \in X$.
        Then, as $\pi_1 \cdot f^\circ = \nabla_X$,
        there is $y \in Y_X$ such that $f(x,y) = \rho_2(x)$.
        Hence, $(x,y) \notin A$.
        On the other hand,
        note that, as $\ftF f(\fx) = \ftF \rho_1 \cdot \ftF \pi_1(\fx)$,
        $(\fx, \ftF\pi_1(\fx))$ belongs to the pullback of
        $\ftF Y_X \xrightarrow{\ftF f} \ftF(X + X) \xleftarrow{\ftF\rho_1} \ftF X$.
        Hence, as $\rho_1$ is injective, $A$ is the pullback of $Y_X \xrightarrow{f} X + X \xleftarrow{\rho_1} X$,
        and $\ftF$ preserves inverse images, we conclude that $\fx \in \ftF A$.
        Therefore, $\ftF$ is elementwise $\calK$-bounded,
        where $\calK_\varnothing = 2$,
        and for every other set $X$, $\calK_X = Y_X$.
\qed\end{proof}
Now, our main result follows straightforwardly.

\begin{theorem}
        \label{p:13}
        A functor that is elementwise bounded admits an extension if and only if it preserves weak pullbacks, and in this case such an extension is unique, namely its  Barr extension.
\end{theorem}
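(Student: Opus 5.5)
The plan is to reduce everything to \Cref{p:12}: it suffices to show that every extension $\extF$ of an elementwise bounded functor $\ftF$ is locally monotone. The ``if'' direction is already covered, since a weak-pullback-preserving functor has its Barr extension as a (locally monotone) extension. For the ``only if'' direction and uniqueness, once local monotonicity is established, \Cref{p:12} shows that $\ftF$ preserves weak pullbacks and that $\extF=\ftbF$.

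To prove local monotonicity we verify condition \ref{p:23} of \Cref{p:6}, namely $\ftF\rho_1 \leq \extF(\nabla_X^\circ)$ for every set $X$.

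\begin{enumerate}
\item Fix $X$ and apply \Cref{p:37} (direction \ref{p:35} $\Rightarrow$ \ref{p:36}). This yields a surjection $e\colon Y_X \to X$, namely the projection $\pi_1\colon X\times\calK_X\to X$, satisfying
\begin{equation*}
\ftF e = \bigvee\{(\ftF\rho_1)^\circ\cdot\ftF f \mid e\cdot f^\circ = \nabla_X\}.
\end{equation*}

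\item For each $f$ with $e\cdot f^\circ=\nabla_X$, taking converses gives $f\cdot e^\circ = \nabla_X^\circ$. By functoriality of the extension,
\begin{equation*}
\extF(\nabla_X^\circ) = \extF(f\cdot e^\circ) = \ftF f\cdot\extF(e^\circ).
\end{equation*}

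\item We need information on $\extF(e^\circ)$. From $e\cdot e^\circ = 1_X$ we get $\ftF e\cdot\extF(e^\circ)=1_{\ftF X}$. So for every $\fx\in\ftF X$ there is some $\fy$ with $\fy \mathrel{(\extF(e^\circ))^\circ}\fx$, i.e.\ $\fx\mathrel{\extF(e^\circ)}\fy$, and necessarily $\ftF e(\fy)=\fx$.

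\item By the displayed identity for $\ftF e$, the pair $(\fy,\fx)\in\ftF e$ lies in $(\ftF\rho_1)^\circ\cdot\ftF f$ for some $f$ with $e\cdot f^\circ=\nabla_X$. That is, $\ftF f(\fy)=\ftF\rho_1(\fx)$.

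\item Combining steps 2 and 4, $\fx \mathrel{\extF(e^\circ)} \fy$ and $\ftF f(\fy)=\ftF\rho_1(\fx)$ give
\begin{equation*}
(\fx,\ftF\rho_1(\fx)) \in \ftF f\cdot\extF(e^\circ) = \extF(\nabla_X^\circ).
\end{equation*}
Hence $\ftF\rho_1\leq\extF(\nabla_X^\circ)$, which is condition \ref{p:23}.
\end{enumerate}

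The main subtlety is that the witness $f$ in step 4 depends on $\fy$. This is harmless only because step 2 gives $\extF(\nabla_X^\circ)=\ftF f\cdot\extF(e^\circ)$ for every admissible $f$ simultaneously. So we are free to choose $f$ after $\fy$ is fixed, and the argument goes through.

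Finally, we note that the join in \Cref{p:37} is taken over a possibly large family of functions $f$, but each pair in $\ftF e$ is witnessed by a single such $f$. This is exactly the form the argument in step 4 uses, so no further care is needed.
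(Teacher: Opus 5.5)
Your proposal is correct and follows the paper's own proof essentially step for step. Like the paper, you reduce to local monotonicity via \Cref{p:12} and verify condition \ref{p:23} of \Cref{p:6} using the surjection $e$ supplied by \Cref{p:37}. You pick $\fy$ from $\ftF e\cdot\extF(e^\circ)=1_{\ftF X}$, choose the witness $f$ with $f\cdot e^\circ=\nabla_X^\circ$, and conclude that $(\fx,\ftF\rho_1(\fx))\in\ftF f\cdot\extF(e^\circ)=\extF(\nabla_X^\circ)$.
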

\begin{proof}
        Suppose that $\ftF$ is elementwise $\calK$-bounded.
        By \Cref{p:12} it suffices to show that $\extF$ is locally monotone. 
        Let \(X\) be a set.
        We claim that \(\ftF \rho_1 \leq \extF (\nabla_X^\circ)\),
        from which we conclude that \(\extF \) is locally monotone by \Cref{p:6}.
        First we note that, by \Cref{p:37}, there is a surjection $e \colon Y_X \twoheadrightarrow X$ such that
        \begin{equation}
                \label{p:38}
                \ftF e
                = \bigvee\{(\ftF \rho_1)^\circ \cdot \ftF f \mid e \cdot f^\circ = \nabla_X\},
        \end{equation}
        where $f$ ranges over all functions from $Y_X$ to $X + X$.
        Now, let $\fx \in \ftF X$.
        Since $e$ is a surjection, $e \cdot e^\circ = 1_X$, which entails
        $\ftF e \cdot \extF(e^\circ) = \extF(e \cdot e^\circ) = 1_{\ftF X}$. 
        Hence, there is $\fy \in \ftF Y$ such that
        $(\fx,\fy) \in \extF(e^\circ)$ and $\ftF e(\fy) = \fx$.
        Then, by \eqref{p:38}, there is $f \colon Y \to X + X$ such that
        $(\fy,\fx) \in (\ftF \rho_1)^\circ \cdot \ftF f$,
        which means $\ftF f(\fy) = \ftF \rho_1(\fx)$, and
        $f \cdot e^\circ = \nabla_X^\circ$.
        Hence,
        $\ftF f \cdot \extF(e^\circ) = \extF(f \cdot e^\circ) = \extF(\nabla_X^\circ)$,
        $(\fx,\fy) \in \extF(e^\circ)$, and $\ftF f (\fy) = \ftF \rho_1(\fx)$.
        Therefore, $(\fx, \ftF\rho_1(\fx)) \in \extF(\nabla_X^\circ)$.
\qed\end{proof}
As an immediate consequence, we obtain that every accessible functor admits at most one extension,  equivalently at most one Kleisli distributive law over the powerset monad.

\begin{corollary}
        A functor that is accessible \emph{or} preserves binary coproducts admits an extension if and only if it preserves weak pullbacks, and in this case such an extension is unique, namely its  Barr extension.
\end{corollary}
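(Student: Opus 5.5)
This corollary follows from \Cref{p:13} once we check that both classes of functors are elementwise bounded. We treat the two hypotheses separately.

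\emph{Accessible functors.} Suppose $\ftF$ is $\lambda$-accessible for a regular cardinal $\lambda$. By \Cref{p:32}, $\ftF$ is elementwise $\lambda$-bounded, taking the constant family $\calK = \lambda$. Here $2 \leq \lambda$ because regular cardinals relevant for accessibility are infinite; finite ones cause no trouble anyway, since any $\lambda$-accessible functor is also $\lambda'$-accessible for larger regular $\lambda'$, so we may take $\lambda \geq \omega$. The argument is the one sketched in \Cref{p:32}: given $\fx \in \ftF(X \times \lambda)$, equation \eqref{p:7} provides $A \subseteq X \times \lambda$ with $|A| < \lambda$ and $\fx \in \ftF A$. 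Since $|\{x\}\times\lambda| = \lambda > |A|$, the set $A$ cannot contain any fibre $\{x\}\times\lambda$.

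\emph{Functors preserving binary coproducts.} Again by \Cref{p:32}, such a functor is elementwise $2$-bounded. Write $X \times 2 \cong X + X$, and use that $\ftF(X+X) \cong \ftF X + \ftF X$ via $\ftF\rho_1, \ftF\rho_2$. Then every $\fx \in \ftF(X\times 2)$ lies in $\ftF(X\times\{0\})$ or in $\ftF(X \times \{1\})$. Neither of these subsets contains a full fibre $\{x\}\times 2$.

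In either case, \Cref{p:13} yields the two remaining claims: $\ftF$ admits an extension iff it preserves weak pullbacks, and any such extension equals the Barr extension $\ftbF$. Via the correspondence between extensions and Kleisli laws recalled in the preliminaries, this is the same as saying there is at most one distributive law over $\mP$, namely the power law.

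There is no real obstacle here. The only point needing a little care is the edge case $X = \varnothing$ (and more generally empty fibres), which \Cref{rem:trivial} already covers, together with making sure that the cardinality bound $2 \leq \calK$ holds in the accessible case.
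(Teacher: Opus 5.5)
Your proposal is correct and follows the paper's route: the paper obtains this corollary immediately from Theorem~\ref{p:13}, using the observations in Example~\ref{p:32} that a $\lambda$-accessible functor is elementwise $\lambda$-bounded and that a functor preserving binary coproducts is elementwise $2$-bounded. Your extra checks, that $2\le\lambda$ holds and that the case $X=\varnothing$ is trivial, are harmless details that the paper leaves implicit.
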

This result can be specialized further.
For instance, it is well-known that a monoid-valued functor preserves weak pullbacks
iff the monoid is positive and refinable~\cite{GS01}.
Therefore:

\begin{corollary}
        A monoid-valued functor admits an extension if and only if the monoid is positive and refinable, and in this case such an extension is unique, namely its  Barr extension.
\end{corollary}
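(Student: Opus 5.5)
The plan is to reduce this corollary to \Cref{p:13}, using the known characterization of weak pullback preservation for monoid-valued functors. Two ingredients are needed. First, every monoid-valued functor $M^{(-)}$ is elementwise bounded. Second, by~\cite{GS01}, $M^{(-)}$ preserves weak pullbacks if and only if $M$ is positive and refinable.

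For the first ingredient, I would observe that $M^{(-)}$ is finitary. Given $f\in M^{(X)}$, its support $A=\{x\mid f(x)\neq 0\}$ is finite. The restriction $f|_A\in M^{(A)}$ satisfies $M^{(i_A)}(f|_A)=f$, because $M^{(i_A)}$ extends by zero; this is just the image formula $M^{(g)}(h)(y)=\sum_{g(x)=y}h(x)$ applied to an injection. Hence $M^{(X)}=\bigcup\{M^{(i_A)}[M^{(A)}]\mid |A|<\omega\}$, so the functor is $\omega$-accessible in the sense of~\eqref{p:7}. By \Cref{p:32}, it is then elementwise $\omega$-bounded. Concretely, for $f\in M^{(X\times\omega)}$ the support is finite, so it cannot contain any whole fibre $\{x\}\times\omega$. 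This is also already covered by \Cref{p:28} together with the previous corollary.

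With elementwise boundedness in hand, \Cref{p:13} gives two things. $M^{(-)}$ admits an extension if and only if it preserves weak pullbacks. In that case the extension is unique and equals the Barr extension. Substituting the characterization of~\cite{GS01} for weak pullback preservation yields the statement.

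I expect no real obstacle, since the characterization from~\cite{GS01} is quoted rather than proved. The only point to check is that the conventions agree. The functor in~\cite{GS01} acts on maps by summing over fibres, which matches \Cref{p:28}. Positivity means $a+b=0$ implies $a=b=0$. Refinability means that $a_1+a_2=b_1+b_2$ admits a $2\times 2$ matrix with these row and column sums.
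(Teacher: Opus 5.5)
Your proposal is correct and follows the paper's own route. Monoid-valued functors are finitary (\Cref{p:28}), hence elementwise bounded by \Cref{p:32}. \Cref{p:13} then reduces existence and uniqueness of extensions to weak pullback preservation, which \cite{GS01} characterizes as the monoid being positive and refinable.
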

As preservation of weak pullbacks is a well-studied  property, our results make it is easy to identify functors that admit a unique extension, or that do not admit any extension at all.
For instance, all functors of \Cref{p:30} admit a unique extension, while none of the finitary functors of \Cref{p:31} admits an extension.

\section{\dots \ And Sometimes More}
We have shown that for elementwise bounded functors, such as the finite powerset functor, weak pullback-preservation is a necessary condition for the existence of extensions and a sufficient condition for their uniqueness.
In this section, we show that this does not hold in general.
Notably, we show that, in contrast to the finite powerset functor, the full powerset functor admits exactly three extensions.

Our examples are based on the fact that monad morphisms induce Kleisli laws.
Recall that given monads $\mS = (\ftS,e^\ftS, m^\ftS)$ and $\mT = (\ftT, e^\ftT,m^\ftT)$,
a natural transformation $\lambda \colon \mS \to \mT$ is a monad morphism
if it makes the following diagrams commute:

\begin{displaymath}
\begin{array}{c c}
\begin{tikzcd}
        \ftId & \ftS  \\
          & \ftT
        \ar[from=1-1, to=1-2, "e^\ftS"]
        \ar[from=1-1, to=2-2, "e^\ftT"']
        \ar[from=1-2, to=2-2, "\lambda"]
\end{tikzcd}
&
\hspace{3em}
\begin{tikzcd}
        \ftS^2 & \ftT\ftS & \ftT^2 \\
        \ftS   & & \ftT.
        \ar[from=1-1,to=1-2,"\lambda\ftS"]
        \ar[from=1-1,to=2-1,"m^\ftS"']
        \ar[from=1-2,to=1-3,"\ftT\lambda"]
        \ar[from=1-3,to=2-3,"m^\ftT"]
        \ar[from=2-1,to=2-3,"\lambda"']
\end{tikzcd}
\end{array}
\end{displaymath}

\begin{proposition}
        Let $\mS=(\ftS,e^\ftS,m^\ftS)$ and
        $\mT=(\ftT,e^\ftT,m^\ftT)$ be monads.
        For every monad morphism $\lambda \colon \mS \to \mT$,
        the natural transformation $e^\ftS \ftT \cdot m^\ftT \cdot \ftT\lambda$
        yields a Kleisli law $\ftT\mS \to \mS\ftT$.
\end{proposition}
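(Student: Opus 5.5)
We write $\sigma = e^\ftS\ftT \cdot m^\ftT \cdot \ftT\lambda \colon \ftT\ftS \to \ftS\ftT$. Componentwise, $\sigma_X$ is the composite $\ftT\ftS X \xrightarrow{\ftT\lambda_X} \ftT\ftT X \xrightarrow{m^\ftT_X} \ftT X \xrightarrow{e^\ftS_{\ftT X}} \ftS\ftT X$. Naturality is immediate, since $\sigma$ is a composite of whiskered natural transformations. What remains is to check the two axioms of a distributive law of the functor $\ftT$ over the monad $\mS$, i.e.\ the unit triangle $\sigma\cdot \ftT e^\ftS = e^\ftS\ftT$ and the multiplication pentagon $\sigma\cdot \ftT m^\ftS = m^\ftS\ftT\cdot \ftS\sigma\cdot\sigma\ftS$. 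We do this by diagram chasing, using only the monad laws and the two monad-morphism axioms.

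\emph{Unit axiom.} We compute
\[
\sigma\cdot\ftT e^\ftS = e^\ftS\ftT\cdot m^\ftT\cdot\ftT(\lambda\cdot e^\ftS) = e^\ftS\ftT\cdot m^\ftT\cdot \ftT e^\ftT = e^\ftS\ftT,
\]
using first $\lambda\cdot e^\ftS=e^\ftT$ and then the right unit law of $\mT$.

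\emph{Multiplication axiom.} For the right-hand side, we first use naturality of $e^\ftS$ with respect to the map $\sigma$:
\[
\ftS\sigma\cdot\sigma\ftS = \ftS\sigma\cdot e^\ftS\ftT\ftS\cdot m^\ftT\ftS\cdot\ftT\lambda\ftS = e^\ftS\ftS\ftT\cdot\sigma\cdot m^\ftT\ftS\cdot\ftT\lambda\ftS .
\]
Composing with $m^\ftS\ftT$ and applying $m^\ftS\cdot e^\ftS\ftS=1$, the right-hand side reduces to
\[
\sigma\cdot m^\ftT\ftS\cdot\ftT\lambda\ftS = e^\ftS\ftT\cdot m^\ftT\cdot\ftT\lambda\cdot m^\ftT\ftS\cdot\ftT\lambda\ftS .
\]
By naturality of $m^\ftT$, we have $\ftT\lambda\cdot m^\ftT\ftS = m^\ftT\ftT\cdot\ftT\ftT\lambda$. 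Associativity of $m^\ftT$ then turns the expression into
\[
e^\ftS\ftT\cdot m^\ftT\cdot\ftT\bigl(m^\ftT\cdot\ftT\lambda\cdot\lambda\ftS\bigr).
\]
For the left-hand side, $\sigma\cdot\ftT m^\ftS = e^\ftS\ftT\cdot m^\ftT\cdot\ftT(\lambda\cdot m^\ftS)$, and the multiplication axiom of the monad morphism gives $\lambda\cdot m^\ftS = m^\ftT\cdot\ftT\lambda\cdot\lambda\ftS$. The two sides therefore agree.

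The main obstacle is only bookkeeping: the whiskering has to be placed correctly, in particular the step moving $e^\ftS$ past $\ftS\sigma$ and the application of naturality of $m^\ftT$ at the component $\ftS X$. We will verify these steps componentwise at a set $X$ to avoid errors. No result from earlier sections is needed; the statement is purely formal and holds for monads on any category.
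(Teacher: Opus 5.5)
Your proof is correct and is essentially the paper's diagram chase: both use the unit and multiplication axioms of $\lambda$, naturality and associativity of $m^T$, naturality of $e^S$, and a unit law of the monad $S$. The only cosmetic difference is in the multiplication axiom. You move the $e^S$ coming from $\sigma S$ past $S\sigma$ and cancel it via $m^S\cdot e^S S=1$. The paper instead moves that $e^S$ past $S T\lambda$ and $S m^T$, and cancels the $S e^S T$ coming from $S\sigma$ via $m^S\cdot S e^S=1$.
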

\begin{proof}
        Let $\lambda \colon \mS \to \mT$ be a monad morphism.
        Note that the axioms of monad, monad morphism, and natural transformation entail that the  diagrams below commute.
        \begin{displaymath}
        \begin{array}{c c}
        \begin{tikzcd}
                \ftT & \ftT\ftS  \\
                     & \ftT^2 \\
                     & \ftT \\
                     & \ftS\ftT
                \ar[from=1-1, to=1-2, "\ftT e^\ftS"]
                \ar[from=1-1, to=2-2, "\ftT e^\ftT", bend right]
                \ar[from=1-2, to=2-2, "\ftT \lambda"]
                \ar[from=1-1, to=4-2, "e^\ftS \ftT"', bend right]
                \ar[from=1-1, to=3-2, bend right, equal]
                \ar[from=2-2,to=3-2, "m^\ftT"]
                \ar[from=3-2,to=4-2, "e^\ftS\ftT"]
\end{tikzcd}
&
\hspace{2em}
\begin{tikzcd}
        \ftT\ftS^2   &            &         & \ftT\ftS   & \\
        \ftT^2 \ftS  & \ftT^3     &         & \ftT^2     & \\
        \ftT\ftS     & \ftT^2     &         & \ftT       & \\
        \ftS\ftT\ftS & \ftS\ftT^2 &\ftS\ftT & \ftS^2\ftT & \ftS\ftT
        \ar[from=1-1,to=1-4,"\ftT m^\ftS"]
        \ar[from=1-1,to=2-1,"\ftT\lambda\ftS"']
        \ar[from=2-1,to=3-1,"m^\ftT \ftS"']
        \ar[from=3-1,to=4-1,"e^\ftS\ftT\ftS"']
        \ar[from=1-4,to=2-4,"\ftT\lambda"]
        \ar[from=2-4,to=3-4,"m^\ftT"]
        \ar[from=3-4,to=4-5,"e^\ftS \ftT"]
        \ar[from=4-1,to=4-2,"\ftS\ftT\lambda"']
        \ar[from=4-2,to=4-3,"\ftS m^\ftT"']
        \ar[from=4-3,to=4-4,"\ftS e^\ftS\ftT"]
        \ar[from=4-4,to=4-5,"m^\ftS \ftT"]
        \ar[from=3-4,to=4-3,"e^\ftS\ftT"']
        \ar[from=3-1,to=3-2,"\ftT\lambda"]
        \ar[from=3-2,to=4-2,"e^\ftS\ftT^2"]
        \ar[from=2-1,to=2-2,"\ftT^2\lambda"]
        \ar[from=2-2,to=3-2,"m^\ftT \ftT"]
        \ar[from=2-2,to=2-4,"\ftT m^\ftT"]
        \ar[from=3-2,to=3-4,"m^\ftT"']
        \ar[from=4-3,to=4-5, equal, bend right]
\end{tikzcd}
\end{array}
\end{displaymath}
\qed\end{proof}
In particular,
a monad morphism \(\lambda \colon \mP \to \mT\)
from the powerset monad \(\mP = (\ftP, \eta, \mu)\)
to a monad \(\mT = (\ftT,e,m)\) gives rise, via the isomorphism \( \REL \simeq \SET_\mP\), 
to an extension of \(\ftT\) to \(\REL\) that
sends a relation \(r \colon X \relto Y\)
to (the graph of) the function \(m_Y \cdot\ftT(\lambda_Y \cdot r^\sharp)\),
where \(r^\sharp \colon X \to \ftP Y\) denotes the function corresponding
to the relation \(r\) under the isomorphism \(\REL \simeq \SET_\mP\).

\begin{example}
        \quad
        \label{p:22}
        \begin{enumerate}
                \item
                        For the monad morphism $\mP \to \mZ$ into the terminal monad
                        we obtain the Barr extension of the constant functor $1$.
                \item
                        For the identity monad morphism $\mP \to \mP$,
                        we obtain the \emph{relational image extension} that sends
                        a relation $r \colon X \relto Y$ to the function
                        that maps a set $A \subseteq X$ to its relational image $r[A]$.
                \item
                        \label{p:100}
                        For the monad morphism $\Box \colon \mP \to \mF$
                        (e.g. \cite[Examples~4.3]{SS08})
                         to the filter monad,
                        whose $X$-component is given by 
                        $\Box_X(U) = \{A \subseteq X \mid U \subseteq A\}$,
                        we obtain the extension that sends a relation $r \colon X \relto Y$
                        to the function
                        that maps a filter $\calA \in \calF X$
                        to $\upc\{r[A] \mid A \in \calA\}$ where $\upc$ denotes upwards closure w.r.t.~subset inclusion.        
                \item
                        \label{p:101}
                        For the monad morphism $\Box \colon \mP \to \mF \hookrightarrow \mM$
                        to the monotone neighbourhood monad,
                        we obtain an extension defined like the one for the filter functor.
                        Furthermore, for the monad morphism $\Diamond \colon \mP \to \mM$
                        (e.g. \cite[Examples~4.3]{SS08}),
                        whose $X$-component is given by
                        $\Diamond_X(U) = \{A \subseteq X \mid U \cap A \neq \varnothing\}$,
                        we obtain a different extension that
                        sends a relation $r \colon X \relto Y$ to the function
                        that maps a monotone neighbourhood system $\calA \in \ftM X$
                        to $\{B \mid r^\circ[B] \in \calA\}$.                        
        \end{enumerate}
\end{example}

By definition, an extension induced by a monad morphism sends each
relation to a function and, hence, generally behaves very differently
from the Barr extension (in particular, of course, such extensions
fail to be locally monotone).
\Cref{p:22} shows that there are functors that \emph{do not} preserve weak pullbacks,
such as the monotone neighbourhood functor (e.g. \cite[Example~3.10(2)]{GHN+25}),  yet admit (multiple) extensions, as well as functors that preserve weak pullbacks but admit multiple extensions,
such as the powerset functor and the filter functor.
In contrast, in the previous section we saw that the closely related  finite powerset functor and  ultrafilter functor admit a unique extension, meaning that elementwise boundedness is crucial to guarantee uniqueness in these cases.

In \Cref{p:50}, we have seen that it is \emph{not} possible to generate extensions by contracting or enlarging the Barr extension.
To conclude, we show that this may be possible for extensions induced by monad morphisms.
In fact, we show that the powerset functor has exactly three extensions:
the Barr extension, the relational image extension, and a slightly modified relational image extension, whose corresponding Kleisli distributive law sends $\{\varnothing\}$ to $\varnothing$ and, therefore, is not induced by a monad morphism.

\begin{proposition}\label{pro:3-laws}
There are exactly three extensions of the powerset functor:
\begin{enumerate}
  \item the Barr extension: $A \mathrel{(\extF r)} B$ iff $B \subseteq r[A]$ and $A \subseteq r^\circ[B]$;  
  \item the relational image extension: $A \mathrel{(\extF r)} B$ iff $B = r[A]$;
  \item the restricted relational image extension: $A \mathrel{(\extF r)} B$ iff either $B = r[A]\neq\emptyset$ or
  $A=B=\emptyset$.
  \end{enumerate}
\end{proposition}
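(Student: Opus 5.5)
We have to do two things: check that the three listed assignments are extensions, and show that every extension $\extF$ of $\ftP$ is one of them. We work with the Kleisli law $\sigma_X\colon \ftP\ftP X\to\ftP\ftP X$ induced by $\extF$ and use whichever picture is convenient. The plan is first to pin down $\sigma$ on a few small test arguments. Then naturality, the unit law and the multiplication law should propagate these values to all arguments.

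\emph{Existence.} The Barr extension is an extension by \Cref{p:12}, since $\ftP$ preserves weak pullbacks. The relational image extension comes from the identity monad morphism (\Cref{p:22}). For the restricted relational image extension I would verify functoriality directly. It agrees with the relational image extension on $A$ whenever $r[A]\neq\emptyset$, relates $\emptyset$ only to $\emptyset$, and relates $A\neq\emptyset$ with $r[A]=\emptyset$ to nothing. Composites therefore behave correctly: if $r[A]\neq\emptyset$ but $s[r[A]]=\emptyset$, both $\extF s\cdot\extF r$ and $\extF(s\cdot r)$ relate $A$ to nothing. On functions $f[A]=\emptyset$ iff $A=\emptyset$, so graphs are preserved.

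\emph{Uniqueness, step 1: small test values.} The unit law gives $\sigma_X(\{\{x\}\mid x\in A\})=\{A\}$. By naturality along maps out of finite sets, $\sigma_X(\mathcal A)$ is constrained by the values of $\sigma_1$ and $\sigma_2$ on the finitely many elements of $\ftP\ftP 1$ and $\ftP\ftP 2$. Concretely, one computes $\sigma_1(\{\emptyset\})$, $\sigma_1(\{\emptyset,1\})$ and $\sigma_2(\{\{0\},\{0,1\}\})$; the last is the place where Egli--Milner and union differ. The multiplication law, applied to elements of $\ftP\ftP\ftP 1$ and $\ftP\ftP\ftP 2$, together with naturality along $2\to 1$ and the two maps $1\to 2$, should leave exactly three consistent combinations of these values. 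Roughly, $\sigma(\{\emptyset\})$ is $\{\emptyset\}$ or $\emptyset$, and $\sigma$ of a "mixed" family is either the Egli--Milner set or the singleton of the union.

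\emph{Uniqueness, step 2: globalising.} The aim is to show that $\sigma_X(\mathcal A)$ is determined by these test values for every $X$ and every $\mathcal A$, including infinite ones. Here I would use \Cref{p:1}\ref{p:24}: it suffices to determine $\extF(i^\circ)$ for injections $i$ and $\extF(e^\circ)$ for surjections $e$. Both can be written as composites of codiagonal-type relations $\nabla^\circ$ and partial identities, and their values are read off from $\sigma$ on families built from one or two points via the "colouring" maps $\phi\colon X\to X+X$ used in the proof of \Cref{p:6}. In the Egli--Milner case, I would show that $\extF(\nabla_X^\circ)\geq\ftP\rho_1$. Then \Cref{p:6} makes $\extF$ locally monotone, and \Cref{p:12} forces it to be the Barr extension. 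In the other cases, I would show that $\extF(e^\circ)$ and $\extF(i^\circ)$ are graphs of functions: $A\mapsto e^{-1}[A]$, and $A\mapsto i^{-1}[A]$ with or without the exception at $\emptyset$. This yields the two relational image extensions.

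\emph{Main obstacle.} The hardest step is the uniformity claim in step 2. A priori $\sigma$ could mix behaviours, for instance Egli--Milner-like on some families and union-like on others, or depend on the cardinality of $\mathcal A$, and the non-accessibility of $\ftP$ prevents a reduction to finite sets by naturality alone. My first attempt is to apply the multiplication law to a family $\{\mathcal A,\mathcal B\}\in\ftP\ftP\ftP X$ with $\mathcal A$ a "test" family and $\mathcal B$ arbitrary. The aim is to show that the behaviour on $\mathcal A$ propagates to $\mathcal B$, so that a single mixed choice already contradicts the multiplication diagram.
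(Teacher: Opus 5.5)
Your existence part is fine, and more explicit than the paper, which only says the restricted law ``is verified directly''. Reducing the Egli--Milner case to $\ftP\rho_1\le\extF(\nabla_X^\circ)$ and then invoking \Cref{p:6} and \Cref{p:12} is also a legitimate strategy. The uniqueness argument, however, has a genuine gap.

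In Step~1, the decisive claim that the constraints ``should leave exactly three consistent combinations'' is only asserted. For instance, the ``exclusive choice'' value $\sigma_2\{\{0,1\}\}=\{\{0\},\{1\}\}$ survives naturality and the unit law, and must be ruled out by an explicit multiplication-law computation. The paper does exactly this for the analogous candidate $\{\{0,2\},\{1,2\}\}$ of $\delta\{\{0,1\},\{2\}\}$, after cutting the options down via naturality along $3\to 2$ and the swap symmetry.

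More seriously, the mechanism you propose for Step~2 cannot work. A finite composite of functions, partial identities, converses of injections and converses of binary codiagonals sends each point to a finite set. So it can never produce $e^\circ$ for a surjection $e$ with infinite fibres; this is precisely the failure of elementwise boundedness that puts $\ftP$ outside \Cref{p:13}. The fallback via $\{\mathcal A,\mathcal B\}\in\ftP\ftP\ftP X$ is not developed, and you flag this step yourself as unresolved. As written, the proposal therefore does not show that every law is one of the three.

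The missing idea is that naturality along maps \emph{into} a finite set already globalises, with no accessibility needed. You gesture at this in Step~1, though you say ``out of''. For $\mathcal A\in\ftP\ftP X$ of any size and a colouring $\chi\colon X\to 2$, the equation $\ftP\ftP\chi(\sigma_X\mathcal A)=\sigma_2(\ftP\ftP\chi(\mathcal A))$ constrains every member of $\sigma_X\mathcal A$ by one of finitely many values. Concretely:
\begin{itemize}
\item Members of $\sigma_X\mathcal A$ lie inside $\bigcup\mathcal A$, by naturality along the inclusion.
\item For a family of pairwise disjoint nonempty blocks, the indicator of one block maps the family to $\{\{1\}\}$ or $\{\{0\},\{1\}\}$. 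The unit law then shows that every member meets that block.
\item In the Egli--Milner case, first establish $\{0\}\in\sigma_2\{\{0,1\}\}$, which is a finite computation from your test value. Let $X+X\to 2$ send the first summand to $0$ and the second to $1$; it maps $\{\{\rho_1a,\rho_2a\}\mid a\in A\}$ to $\{\{0,1\}\}$ for $A\neq\emptyset$. Hence some $Y$ with $A\mathrel{\extF(\nabla_X^\circ)}Y$ lies in the first summand, and by the two previous points it must be $\rho_1[A]$. This gives $\ftP\rho_1\le\extF(\nabla_X^\circ)$ for all $X$ at once, and your appeal to \Cref{p:6} and \Cref{p:12} finishes the case. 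It even bypasses the paper's final multiplication-law step in the Barr case.
\item In the union-type cases, use the indicator of a single point $x\in\bigcup\mathcal A$, together with the finitely many values of $\sigma_2$ on $\ftP\ftP 2$. This forces $x\in Y$ for every $Y\in\sigma_X\mathcal A$.
\item The terminal map $X\to 1$, with the values $\sigma_1\{\emptyset\}$ and $\sigma_1\{\emptyset,1\}$, then decides whether $\sigma_X\mathcal A$ is $\{\bigcup\mathcal A\}$ or empty.
\end{itemize}
This colouring technique, with colourings into $2$ and $3$, is what the paper uses. Without it, your Step~2 is missing.
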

\begin{proof}
We use the fact that extensions are equivalent to Kleisli distributive laws. So, 
let $\delta\c \ftF\ftP \to\ftP\ftF$ be such a law, where $\ftF$ is the powerset functor.
As a first step, we argue that $\delta$ must satisfy exactly one of the following 
conditions:
\begin{align}
\delta\{\{0,1\},\{2\}\} =&\; \{\{0,2\},\{1,2\},\{0,1,2\}\},\label{eq:barr-law}\\
\delta\{\{0,1\},\{2\}\} =&\; \{\{0,1,2\}\}\label{eq:em-law}
\end{align}
Let $f\c\{0,1,2\}\to\{1,2\}$ send $0,1$ to $1$ and $2$ to $2$, and let $p=\{\{0,1\},\{2\}\}$.
Then, using naturality of $\delta$ and the law $\delta(\ftF\eta) = \eta$, 
\begin{align*}
\{\{1,2\}\} =\delta \{\{1\},\{2\}\} = \delta (\ftF\ftP f)(p) = (\ftP\ftF f) (\delta (p)).
\end{align*}
Thus $\delta(p)$ must be such a set of subsets $s$ of $\{0,1,2\}$ that $(\ftP f)(s) = \{1,2\}$. 
This gives us the following cases:
\begin{enumerate}
  \item $\delta\{\{0,1\},\{2\}\}= \{\{0,2\}\}$,
  \item $\delta\{\{0,1\},\{2\}\}= \{\{1,2\}\}$,
  \item $\delta\{\{0,1\},\{2\}\}= \{\{0,2\},\{1,2\}\}$,
  \item $\delta\{\{0,1\},\{2\}\}= \{\{0,1,2\}\}$,
  \item $\delta\{\{0,1\},\{2\}\}= \{\{0,2\},\{0,1,2\}\}$,
  \item $\delta\{\{0,1\},\{2\}\}= \{\{1,2\},\{0,1,2\}\}$,
  \item $\delta\{\{0,1\},\{2\}\}= \{\{0,2\},\{1,2\},\{0,1,2\}\}$.
\end{enumerate}
Options 1, 2, 5 and 6 are incompatible with naturality of $\delta$: swapping 
$0$ and $1$ keeps the left-hand side unchanged, but not the right one. Let us also exclude~3. We have
\begin{align*}
\{\{0,2\},\{1,2\}\}
 =&\;\delta \{\{0,1\},\{2\},\{0,1\}\}\\*
 =&\;(\delta (\ftF\mu))\{\{\{0,1\},\{2\}\},\{\{0,1\}\}\}\\*
 =&\;(\mu (\ftP\delta)\delta)\{\{\{0,1\},\{2\}\},\{\{0,1\}\}\}\\
 =&\;(\mu (\ftP\delta))\{\{\{0,1\},\{0,1\}\},\{\{2\},\{0,1\}\}\}\\
 =&\;\mu\{\delta\{\{0,1\}\},\{\{2,0\},\{2,1\}\}\}\\
 =&\;\delta\{\{0,1\}\}\cup\{\{0,2\},\{1,2\}\},
\end{align*}
entailing $\delta\{\{0,1\}\}\subseteq\{\{0,2\},\{1,2\}\}$. This contradicts
naturality of $\delta$.
We are left exactly with the above options~\eqref{eq:barr-law} and~\eqref{eq:em-law}. 

Given a set $X$, and a non-empty family of non-empty mutually disjoint subsets ${(X_i\subseteq X)_{i\in I}}$,
assume
\begin{align}\label{eq:PP-dl}
\delta\{X_i\}_{i\in I} = \{Y_j\subseteq X\}_{j\in J}.
\end{align}
Let us show that for every $j\in J$ and every $i\in I$, $X_i\cap Y_j\neq\emptyset$. Fix some 
$i\in I$. If $X = X_i$, we are done straightforwardly. Assume that $X\neq X_i$,
 and let $f\c X\to\{0,1\}$ send the elements of $X_i$ to $1$ and the rest to 
$0$. Then 
\begin{align*}
\{\{0,1\}\} =\delta \{\{0\},\{1\}\} = \delta (\ftF\ftP f)\{X_i\}_{i\in I} = (\ftP\ftF f) (\delta \{X_i\}_{i\in I}).
\end{align*}
If $\delta \{X_i\}_{i\in I}$ contained a set disjoint from $X_i$, $\ftF f$ applied to 
this set would yield~$\{1\}$, contradiction.
Also, observe that by compatibility of $\delta$ with $\eta$,
\begin{align}\label{eq:d-e-ee}
\delta(\emptyset) = \{\emptyset\}.
\end{align}
Assume~\eqref{eq:barr-law}. Our goal is to derive the general formula for $\delta$:
\begin{align}\label{eq:barr-ne}
\delta\{X_i\}_{i\in I} = \Bigl\{Y\subseteq\bigcup_{i\in I} X_i\mid\forall i\in I.\, Y\cap X_i\neq\emptyset\Bigr\}.
\end{align}
We establish a special case first: In \eqref{eq:PP-dl}, take $I = I_1\cup I_2$, such that
$I_1\cap I_2=\emptyset$, $X_i = \{x_i,y_i\}$ if $i\in I_1$ and $X_i = \{x_i\}$
if $i\in I_2$. We proceed to prove that 
\begin{align}\label{eq:barr-21}
\{x_i\}_{i\in I}\in\delta(\{\{x_i,y_i\}\}_{i\in I_1}\cup\{\{x_i\}\}_{i\in I_2}).
\end{align}
If we can pick such $Y\in\delta\{X_i\}_{i\in I}$
that for every $i\in I_1$, either $x_i\notin Y$ or $y_i\notin Y$, we can prove~\eqref{eq:barr-21}
by applying naturality w.r.t.\ $f\c X\to X$ that swaps $x_i$ and~$y_i$ whenever $y_i\in Y$. 
Recall that for every $Y\in \delta\{X_i\}_{i\in I}$,
and every $i\in I_1$, $Y\cap\{x_i,y_i\} \neq\emptyset$, and thus $x_i\in Y$ or $y_i\in Y$.
Let us show that we can indeed pick such $Y$ that for every $i\in I_1$, $x_i$ and $y_i$
are not simultaneously in $Y$. Assume the opposite: for some $k\in I_1$, for all 
$Y\in\delta\{X_i\}_{i\in I}$, $\{x_k,y_k\}\subseteq Y$.
Let $f\c X\to\{0,1,2\}$ send $x_k$ to $0$, $y_k$ to $1$ and all the remaining elements 
to $2$. By naturality, every element of $\delta\{\{0,1\},\{2\}\}$ contains both $0$ and $1$,
contradicting~\eqref{eq:barr-law}.

Using~\eqref{eq:barr-21}, we continue with the proof of the general formula~\eqref{eq:barr-ne}:
As we argued above, every $Y\in\delta\{X_i\}_{i\in I}$ must have a non-empty intersection 
with every $X_i$. We are left to prove the opposite: for every $Y\subseteq X$,
if $Y\cap X_i\neq\emptyset$ for all $i\in I$, then $Y\in\delta\{X_i\}_{i\in I}$. 
Let us first prove a special case:
\begin{align*}
\bigcup_{i\in I} X_i\in\delta\{X_i\}_{i\in I}.
\end{align*}
Indeed, pick  $Z\in\delta\{\{x\}\times X_i\}_{i\in I,x\in X_i}$ (thus, $Z\in X\times\ftP(X)$),
and note that the elements of $Z$ have the form $(x,W)$, where $i\in I$, $x\in X_i$ 
and $W\neq\emptyset$. By applying  naturality w.r.t.\ the left projection $X\times\ftP(X)\to X$, we obtain that 
$\bigcup_{i\in I} X_i\in\delta\{X_i\}_{i\in I}$. Now, take an arbitrary
 $Y\subseteq X$ such that $Y\cap X_i\neq\emptyset$ for all $i\in I$ and partition $I$ disjointly into $I_1$ and $I_2$ so that $i\in I_2$
iff $X_i\subseteq Y$. Let $x_i = Y\cap X_i$, $y_i = X_i\setminus Y$ for $i\in I_1$
and $x_i = X_i$ for $i\in I_2$. Then
\begin{align*}
Y = \bigcup_{i\in I} x_i
 \in&\;\delta(\{x_i\}_{i\in I})\\*
 =&\;(\mu (\ftP\delta))\{\{x_i\}_{i\in I}\}\\*
 \subseteq&\;(\mu (\ftP\delta)\delta)\{\{x_i,y_i\}\}_{i\in I_1}\cup\{\{x_i\}\}_{i\in I_2})\\
 =&\;(\delta (\ftF\mu))(\{\{x_i,y_i\}\}_{i\in I_1}\cup\{\{x_i\}\}_{i\in I_2})\\*
 =&\;\delta\{X_i\}_{i\in I}.
\end{align*}
We have thus proved~\eqref{eq:barr-ne} for $I\neq\emptyset$ and non-empty $X_i$. By~\eqref{eq:d-e-ee},
the same formula remains valid for $I=\emptyset$. We are left to show that it also 
remains valid if $I\neq\emptyset$ and some of the~$X_i$ are empty. The following calculation, together with naturality of $\delta$, 
shows that $\delta\{\emptyset,\{1\}\} = \emptyset$: 
\begin{align*}
\delta\{\emptyset,\{1\}\}\subseteq&\;(\mu (\ftP\delta))\{\{\emptyset,\{1\}\},\{\{2\},\{1\}\},\{\emptyset,\{2\},\{1\}\}\}\\*
 =&\;(\mu (\ftP\delta)\delta)\{\{\emptyset,\{2\}\},\{\{1\}\}\}\\
 =&\;(\delta (\ftF\mu))\{\{\emptyset, \{2\} \},\{\{1\}\}\}\\*
 =&\;\delta\{\{2\},\{1\}\}\\*
 =&\;\{\{1,2\}\}.
\end{align*}
It is shown analogously that $\delta\{\emptyset\} = \emptyset$. This entails that 
$\delta\{X_i\}_{i\in I} = \emptyset$ whenever $X_i = \emptyset$ for some $i\in I$ --
otherwise we would obtain a contradiction to naturality of $\delta$ w.r.t.\ the constant 
function $f\c X\to\{1\}$. This completes the proof that the only distributive law
satisfying~\eqref{eq:barr-law} is~\eqref{eq:barr-ne}.

Now, let us assume~\eqref{eq:em-law}. Let $i\in I$ and let $x\in X_i$. Let $f\c X\to\{0,1\}$
send $x$ to $1$ and all the other elements to $0$. Then $(\ftP\ftF f) (\delta \{X_i\}_{i\in I}) = \delta (\ftF\ftP f)\{X_i\}_{i\in I}$
is either $\delta\{\{0,1\},\{0\}\} = \{\{0,1\}\}$ or $\delta\{\{1\},\{0\}\} = \{\{0,1\}\}$
or $\delta\{\{1\}\} = \{\{1\}\}$. In all cases, the result does not contain $\{0\}$,
which would happen if some $Y_j$ did not contain $x$. We thus proved that for every 
$i\in I$ and every $j\in J$, $X_i\subseteq Y_j$. Therefore~\eqref{eq:PP-dl} turns 
into 
\begin{align*}
\delta\{X_i\}_{i\in I} = \Bigl\{\bigcup_{i\in I} X_i\Bigr\}
\end{align*}
By \eqref{eq:d-e-ee}, the same formula is still valid if $I=\emptyset$. Let us show that 
the same formula is also valid for $|I|>1$ and $X_i = \emptyset$ for some $i$.
With $I'\neq\emptyset$ indeed, we have
\begin{align*}
\delta(\{X_i\}_{i\in I'}\cup\{\emptyset \})
 =&\;(\mu (\ftP\delta))\{\{X_i\}_{i\in I'}\cup\{\emptyset \}\}\\*
 =&\;(\mu (\ftP\delta)\delta)\{\{X_i,\emptyset \}\mid i\in I'\}\\
 =&\;(\delta (\ftF\mu))\{\{X_i,\emptyset \}\mid i\in I'\}\\*
 =&\;\delta\{X_i\}_{i\in I'}\\*
 =&\;\Bigl\{\bigcup\nolimits_{i\in I'} X_i\Bigr\}.
\end{align*}
The only case that is still not covered is $\{X_i\}_{i\in I} = \{\emptyset\}$. 
For $\delta\{\emptyset\}$, only two choices are possible $\delta\{\emptyset\} = \{\emptyset\}$
and $\delta\{\emptyset\} = \emptyset$. The former option is a special case of \eqref{eq:em-law}.
The latter option still gives a distributive law, which is verified directly.\qed
\end{proof}

\section{Conclusions and Future Work}

We have studied existence and uniqueness of distributive laws of set
functors over the powerset monad -- equivalently, of extensions of
functors from $\SET$ to $\REL$. We have identified the class of
\emph{elementwise bounded} set functors, which strictly includes all
accessible functors. For elementwise bounded functors, a distributive
law exists if and only if weak pullbacks are preserved, and in that
case the distributive law is unique and given by the power law
(\Cref{p:13}), which induces the Barr extension.  Our uniqueness
argument consisted in showing that every extension of an elementwise
bounded set functor is \emph{locally monotone}, revealing an
unexpected connection between local monotonicity of extensions and
accessibility-like conditions on set functors.  The powerset functor
itself (which fails to be elementwise bounded) shows that uniqueness
may fail beyond this setting: there are exactly two additional
distributive laws of the full powerset functor over the (full)
powerset monad (\Cref{pro:3-laws}).

Overall, our work clarifies the degree of canonicity of extensions of
functors to relations. One important direction for future work is to
obtain similar clarifications for real-valued relations and, more
generally, for quantitative relations, where the overall picture must
be expected to be rather more complex.

\bibliographystyle{splncs04}
\bibliography{references}
\end{document}